\def\mode{0} 
\pgfplotsset{compat=newest}
\pgfplotsset{every axis/.append style={
		label style={font=\normalsize},
		tick label style={font=\normalsize}  
}}
\tikzset{add reference/.style={insert path={%
			coordinate [pos=0,xshift=-0.5\pgflinewidth,yshift=-0.5\pgflinewidth] (#1 south west) 
			coordinate [pos=1,xshift=0.5\pgflinewidth,yshift=0.5\pgflinewidth]   (#1 north east)
			coordinate [pos=.5] (#1 center)                        
			(#1 south west |- #1 north east)     coordinate (#1 north west)
			(#1 center     |- #1 north east)     coordinate (#1 north)
			(#1 center     |- #1 south west)     coordinate (#1 south)
			(#1 south west -| #1 north east)     coordinate (#1 south east)
			(#1 center     -| #1 south west)     coordinate (#1 west)
			(#1 center     -| #1 north east)     coordinate (#1 east)   
}}}
\newcommand*{\Crefns}[1]{{\@cref@sortfalse\@cref@compressfalse\Cref{#1}}}
\newtheoremstyle{colorthm}{\topsep}{\topsep}{\normalfont}{1em}{\color{nblue}\sf\itshape\small\bfseries}{:}{ }{\thmname{#1}\thmnumber{ #2}{\thmnote{ (#3)}}}
\theoremstyle{colorthm}
\newtheorem{thm}{Theorem}
\newtheorem{prop}{Proposition}
\newtheorem{definition}{Definition}
\newtheorem{prob}{Problem}
\newtheorem{ass}{Assumption}
\newtheorem{ex}{Example}
\Crefname{cor}{Corollary}{Corollaries}
\Crefname{prop}{Proposition}{Propositions}
\Crefname{conj}{Conjecture}{Conjectures}
\Crefname{ass}{Assumption}{Assumptions}
\Crefname{figure}{Figure}{Figs.}
\Crefname{ex}{Example}{Examples}
\Crefname{prob}{Problem}{Problems}
\newcommand{\calD}{{\cal D}}
\newcommand{\calE}{{\cal E}}
\newcommand{\calG}{{\cal G}}
\newcommand{\calH}{{\cal H}}
\newcommand{\calI}{{\cal I}}
\newcommand{\calN}{{\cal N}}
\newcommand{\calO}{{\cal O}}
\newcommand{\calP}{{\cal P}}
\newcommand{\calV}{{\cal V}}
\newcommand{\lr}{\left(}
\newcommand{\rr}{\right)}
\newcommand{\lb}{\left\lbrace}
\newcommand{\rb}{\right\rbrace}
\newcommand{\Real}[1]{ { {\mathbb R}^{#1} } }
\DeclareMathOperator*{\argmax}{arg\;max}
\DeclareMathOperator*{\argmin}{arg\;min}
\DeclareMathOperator*{\softmax}{softmax}
\newcommand{\DeclareAutoPairedDelimiter}[3]{%
	\expandafter\DeclarePairedDelimiter\csname Auto\string#1\endcsname{#2}{#3}%
	\begingroup\edef\x{\endgroup
		\noexpand\DeclareRobustCommand{\noexpand#1}{%
			\expandafter\noexpand\csname Auto\string#1\endcsname*}}%
	\x}
\DeclareAutoPairedDelimiter{\ceil}{\lceil}{\rceil}
\DeclareAutoPairedDelimiter{\floor}{\lfloor}{\rfloor}
\newcommand{\norm}[2][]{\left\lVert#2\right\rVert^2_{#1}}
\renewcommand{\b}[1]{\boldsymbol{#1}}
\newcommand{\Sum}[2]{#1 + #2 = \the\numexpr #1 + #2 \relax \\}
\newcommand{\data}[1][]{\calD_{#1}}
\newcommand{\datapoint}[2]{d%
	\if#1\@empty \else  _{#1,#2} \fi}
\newcommand{\feat}[2]{a%
	\if#1\@empty \else  _{#1,#2} \fi}
\newcommand{\lab}[2]{b%
	\if#1\@empty \else  _{#1,#2} \fi}
\newcommand{\idxset}[1]{\calI_{#1}}
\newcommand{\blue}[1]{#1}
\newcommand{\linkToPdf}[1]{\href{#1}{\blue{(pdf)}}}
\newcommand{\linkToPpt}[1]{\href{#1}{\blue{(ppt)}}}
\newcommand{\linkToCode}[1]{\href{#1}{\blue{(code)}}}
\newcommand{\linkToWeb}[1]{\href{#1}{\blue{(web)}}}
\newcommand{\linkToVideo}[1]{\href{#1}{\blue{(video)}}}
\newcommand{\linkToMedia}[1]{\href{#1}{\blue{(media)}}}
\newcommand{\award}[1]{\xspace} 
\newcommand{\eg}{\emph{e.g.,}\xspace}
\newcommand{\ie}{\emph{i.e.,}\xspace}
\title{Personalized and Resilient Distributed Learning Through Opinion Dynamics}
\author{Luca~Ballotta\textsuperscript{\orcidlink{0000-0002-6521-7142}},~\IEEEmembership{Member,~IEEE}, Nicola~Bastianello\textsuperscript{\orcidlink{0000-0002-5634-8802}},~\IEEEmembership{Member,~IEEE}, Riccardo~M.~G.~Ferrari\textsuperscript{\orcidlink{0000-0003-3615-5445}},~\IEEEmembership{Senior~Member,~IEEE}, and~Karl~H.~Johansson\textsuperscript{\orcidlink{0000-0001-9940-5929}},~\IEEEmembership{Fellow,~IEEE}
	\thanks{This work was supported in part
		by the EU Horizon-RIA Projects TWAIN under Grant 101122194 and ULTIMATE under Grant 101070162,
        and in part by the Swedish Research Council Distinguished Professor Grant 2017-01078 Knut and Alice Wallenberg Foundation Wallenberg Scholar Grant.
		\textit{(The first two authors contributed equally to this work).}}%
	\thanks{Luca Ballotta is with the Department of Information Engineering, University of Padova, 35131 Padova, Italy
		(e-mail: luca.ballotta@unipd.it).
		Work done while affiliated with the Delft University of Technology.}
	\thanks{Riccardo M. G. Ferrari is with the Delft Center for Systems and Control (DCSC), Delft University of Technology, 2628 CD Delft, Netherlands
		(e-mail: r.ferrari@tudelft.nl).}%
	\thanks{Nicola Bastianello and Karl H. Johansson are with the School of Electrical Engineering and Computer Science, and Digital Futures, KTH Royal Institute of Technology, Sweden.
		(e-mail: \{nicolba, kallej\}@kth.se).}%
	}
\begin{document}
	
	\if0\mode
	\begin{textblock}{20}(-2,0.05)
		\footnotesize
		\centering
		\setstretch{1}
		This article has been accepted for publication in the IEEE Transactions on Control of Network Systems.
		Please cite the paper as:\\
		L. Ballotta, N. Bastianello, R. M. G. Ferrari, and K. H. Johansson,\\
		``\titlecap{Personalized and Resilient Distributed Learning Through Opinion Dynamics},''\\
		IEEE Transactions on Control of Network Systems, 2026.
	\end{textblock}
	\fi
	
 	\bstctlcite{bib-options}
	
	\maketitle
	

\begin{abstract}
	
	In this paper,
	we address two practical challenges of distributed learning in multi-agent network systems,
	namely personalization and resilience.
	Personalization is the need of heterogeneous agents to learn local models tailored to their own data and tasks, while still generalizing well;
	on the other hand,
	the learning process must be resilient to cyberattacks or anomalous training data to avoid disruption.
	Motivated by a conceptual affinity between these two requirements,
	we devise a distributed learning algorithm that combines distributed gradient descent and the Friedkin-Johnsen model of opinion dynamics to fulfill both of them.
	We quantify its convergence speed and the neighborhood that contains the final learned models,
	which can be easily controlled by tuning the algorithm parameters to enforce a more personalized/resilient behavior.
	We numerically showcase the effectiveness of our algorithm on synthetic and real-world distributed learning tasks,
	where it achieves high global accuracy both for personalized models and with malicious agents compared to standard strategies.
		
	\begin{IEEEkeywords}
		Distributed learning,
		personalized learning,
		resilient distributed learning,
		Friedkin-Johnsen model.
	\end{IEEEkeywords}
\end{abstract}

\section{Introduction}\label{sec:intro}

\IEEEPARstart{D}{istributed} learning is the natural extension of machine learning to multi-agent systems where devices collaboratively train models.
Applications include multi-robot systems~\cite{Yu22ral-DiNNO},
connected vehicles~\cite{Ma21tvt-distributedLearningPlatoon}, 
smart grids~\cite{Huang18acm-distributedLearningSmartGrid}, 
wind power forecasting~\cite{Sommer21ijf-distributedLearningWind},
and the Internet-of-Things~\cite{Le24cst-distributedLearningIoT}.
Distributed learning is more robust than federated learning whereby the central aggregator may bias the global model towards some agents and is vulnerable to cyberattacks~\cite{Ma23pieee-TrustedAIinMultiagentSystems}.
Yet, other problems are present.
In this paper,
we focus on two key requirements in multi-agent network systems:
1) personalization reflects the need of heterogeneous agents to prioritize their own local data and tasks;
2) resilience is the capability of agents to deter unknown intruders trying to disrupt the learning process.

\subsection{Literature Review}\label{sec:related-works}

Distributed learning is based on recent developments in distributed optimization~\cite{Nedic18pieee-distributedOptimizationSurvey,Notarstefano19fdsc-distributedOptimizationSurvey}.
Among distributed optimization algorithms,
distributed gradient descent (DGD), gradient tracking (GT), and dual methods (especially ADMM) are widespread~\cite{Notarstefano19fdsc-distributedOptimizationSurvey}.
Algorithms based on GT and ADMM achieve exact convergence to an optimal solution using fixed parameters, 
while DGD converges inexactly unless a vanishing step-size,
which slows down convergence,
is employed.
These algorithms are leveraged to solve distributed learning problems~\cite{Xin20spm-DecentralizedStochasticOptimization,Chang20spm-DistributedLearningNonconvex}.

Most works assume that all agents are honest. 
However, this may not be the case as an adversary can deploy attacks through the wireless network~\cite{Ma23pieee-TrustedAIinMultiagentSystems,Lyu24tnnls-PrivacyRobustnessFL,Liu24arxiv-distributedLearningSurvey}. 
Attacks can target either data, to infer or corrupt the agents' datasets,
or models,
to worsen their accuracy or bias.
In this paper we focus on model attacks,
which requires securing the training algorithm.
A growing literature is devoted to resilient distributed algorithms,
see~\cite{Liu24arxiv-distributedLearningSurvey} for a survey.
A core component of distributed algorithms is consensus averaging, 
during which each agent exchanges models and possibly gradients with neighbors and updates its own local model based on received information.
This step is vulnerable because attackers can transmit malicious data and pollute the agents' updates.
While actively discriminating malicious agents to disregard their model updates is effective,
it requires cybersecurity mechanisms not suited for low-power devices with limited hardware or fast algorithm execution.
Hence,
designing a passively resilient distributed algorithm with low computational footprint is of utmost importance.
To this aim,
resilient distributed algorithms replace the consensus step with a robust protocol.
A common strategy uses trimmed means without the largest and smallest received values~\cite{Sundaram19tac-resilientDistributedOptimization,Fang22tsipn-BRIDGE},
which requires dense inter-agent connectivity and a shared bound on the number of adversaries.
Shang~\cite{Shang22tcs-medianBasedResilientConsensus} replaces the average with the median,
but still requires dense connectivity.
A related algorithm in~\cite{Li22tro-resilientDistributedLearningRobots} uses the centerpoint,
a generalization of the median to higher dimensions,
but assumes that all agents have the same optimizer.
Abbas \emph{et al.}~\cite{Abbas18tcns-trustedNodes} leverage  ``trusted'' agents which must be secured from cyberattacks.
A line of works uses ``trust'' obtained from physical information channels to filter out malicious messages,
whose performance depends on the statistics of trust observations~\cite{Yemini25tac-resilientDistributedOptim,Ballotta24acc-trustConfidence,Hadjicostis22cdc-trustworthyConsensus}.
\blue{By contrast,
in this work we focus effort on a solution that does not require dense connectivity among agents,
shared parameters,
or extra resources and provides guarantees under standard assumptions.}

Another challenge of distributed learning is the statistical heterogeneity of the cooperating agents~\cite{Fallah20nips-personalizedFL}. 
The agents' local data are often drawn from different distributions which reflect their peculiar perspectives (\textit{e.g.}, different information sources, sensors, or plant specifications).
As a consequence,
each agent may wish to learn a model which is highly accurate on its own local distribution,
possibly trading this need with lower accuracy on other agents' distributions.
This concept is referred to as \emph{personalization}.
Personalized learning has been widely explored in federated learning wherein it relies on the central coordinator~\cite{Fallah20nips-personalizedFL,Dinh20nips-personalizedFL,Hanzely22tmlr-personalizedFL}.
Also,
it has been sparsely addressed in distributed learning~\cite{Sadiev22jco-decentralizedPersonalizedFL,Toghani23lcss-PARSPush},
but these approaches transform the original problem to a computationally burdensome bi-level optimization,
especially challenging with compute-intensive models such as in deep learning.
On the contrary,
here we are interested in embedding personalization into the algorithm itself rather than the problem formulation to avoid an excessive computational burden.

The synergy of personalization with other objectives has been explored in federated learning.
Kundu \emph{et al.}~\cite{Kundu22edge-robustnessPersonalizationFL} integrate personalization with robustness to outliers, that is, agents with a significantly different local distribution.
Han \emph{et al.}~\cite{Han23infocom-SplitGP} discuss how personalization and generalization can both be achieved via tailored algorithm design.
Bietti \emph{et al.}~\cite{Bietti22icml-personalizationPrivacyAccuracy} explore the impact of personalized learning for privacy preservation (\ie robustness against attacks on data).
Li \emph{et al.}~\cite{Li21pmlr-Ditto} discuss the potential of personalization to achieve robustness to model attacks and fairness, which ensures uniform performance of the trained models across agents.
However,
all these works do not apply to a fully distributed setup and rely on computationally intensive reformulations of the problem to achieve personalization.

\subsection{Contribution}\label{sec:contribution}

We propose a novel distributed learning algorithm to achieve personalization and resilience.
The idea is to purposely bias,
in a controlled way, 
each agent towards its own local cost.
In collaborative but heterogeneous settings,
this strategy improves local accuracy achieved by the agents,
enhancing personalization;
with malicious agents,
it makes training resilient by reducing influence of  those on the learned models.
In light of this connection,
we develop one algorithm to tackle both problems.
We draw inspiration from the resilient consensus approach in~\cite{Ballotta24tac-competitionCollaboration} and combine the Friedkin-Johnsen (FJ) opinion dynamics model,
originally conceived to capture disagreement,
with DGD.
The key ingredient of our algorithm is a scalar parameter $\lambda\in[0,1]$,
modeling opinion ``stubbornness'' in the original FJ model,
that allows agents to smoothly transition from collaborative training ($\lambda=0$),
which targets high global accuracy irrespectively of heterogeneous agents or attacks,
to local training ($\lambda=1$),
which achieves high local accuracy but cannot generalize well.
We characterize the geometric convergence rate of our algorithm and the distance between learned models and optimum of the nominal distributed learning problem as functions of design parameters.
We conduct an extensive numerical campaign with synthetic and real-world classifications problems.
Our algorithm achieves superior personalization compared to DGD and enhances resilience with over $10\%$ of malicious agents scattered across a sparse network,
improving accuracy of DGD by up to $78\%$.
To the best of our knowledge,
this is the first approach for personalized distributed learning that does not require a computationally intensive reformulation of the original problem.
Further,
it does not require dense communication or extra resources such as secured agents or trust information,
and enjoys formal guarantees under standard assumptions on the cost functions.
In the context of resilience,
our algorithm may be used as a first defense mechanism to let agents partially collaborate from the start till the adversaries are detected by more sophisticated but slower strategies,
such as standard network security or the trust-based algorithms in~\cite{Ballotta24acc-trustConfidence,Hadjicostis22cdc-trustworthyConsensus,Yemini25tac-resilientDistributedOptim}.

\subsubsection*{Organization}
We introduce the distributed learning setup in \autoref{sec:setup},
describing in detail the concepts of personalization (\autoref{sec:what-personalization}) and resilience (\autoref{sec:what-resilience}),
and discussing their affinity (\autoref{sec:affinity}).
In \autoref{sec:algorithm-design} we develop our distributed learning algorithm by combining DGD and the FJ model.
In \autoref{sec:convergence-analysis} we analyze its fixed point and convergence speed in both cases with fully collaborative agents and with malicious agents sending bounded values.
In \autoref{sec:experiments} we test our algorithm on synthetic and real-world classification tasks,
where it outperforms DGD and local training in achieving personalization and resilience.
We discuss current limitations and potential directions of improvements in \autoref{sec:discussion},
and draw conclusions in \autoref{sec:conclusion}.

\section{Distributed Learning Setup}\label{sec:setup}

Consider $N$ agents that exchange information over a wireless network modeled as an undirected graph $\calG = (\calV,\calE)$, where $\calV \doteq \{1,\dots,N\}$ and $(i,j)\in\calE$ means that there exists a direct communication link between the two agents labeled $i$ and $j$.
Agent $i$ is equipped with a decision variable $x_i\in\Real{n}$ and cost function $f_i : \Real{n} \to \Real{}$.
\blue{Notation $\{x_i\}_{i\in\calV}$ denotes the set of all agents' variables.}
Each agent directly manipulates only its variable $x_i$ but can both access its cost function $f_i$ and exchange information with neighbors.
In distributed optimization,
the agents cooperatively solve the following problem
\begin{mini}
	{\{x_i\}_{i\in\calV}}
	{\sum_{i\in\calV} f_i(x_i)}
	{\label{eq:distr-learning}}
	{}
    \addConstraint{x_i}{=x_j}{\ \forall i,j\in\calV.}
\end{mini}

We turn to the specific case of distributed learning,
which is the focus of this paper.
We first recap the underlying probabilistic optimization framework.
Here,
agent $i$ samples data from a local probability distribution $\mathcal{P}_i$ that reflects the agent's perspective on,
or accessed portion of,
the global phenomenon being observed by the network.
Each agent $i$ aims to learn a model,
parameterized by $x_i$,
able to describe (or ``explain'') both its local distribution $\calP_i$ and those of the other agents $\{\calP_j\}_{j\neq i}$.\footnote{
    This is effectively possible only if the distributions $\{\calP_i\}_{i\in\calV}$ are ``similar'' enough.
}
Hence,
the agents cooperate to solve the risk-minimization problem
\begin{mini}
	{\{x_i\}_{i\in\calV}}
	{\sum_{i\in\calV} \mathbb{E}_{d \sim \mathcal{P}_i} \left[ \ell(x_i; d) \right]}
	{\label{eq:risk-min}}
	{}
    \addConstraint{x_i}{=x_j}{\ \forall i,j\in\calV.}
\end{mini}
In problem~\eqref{eq:risk-min},
loss function $\ell : \Real{n} \times \Real{p} \to \Real{}$ quantifies how accurately the model $x_i$ describes a data point $d$ sampled from $\mathcal{P}_i$.
A classic example is image classification
where the data are image-class pairs $d=(\feat{}{},\lab{}{})$ and the model matches an image $\feat{}{}$ to a semantic class $\lab{}{}$ such as an alphanumeric character or an object.
In this case, $x_i$ minimizes
$\ell(x_i;(\feat{}{},\lab{}{}))$ if the model outputs $\lab{}{}$ given input $\feat{}{}$.
However,
problem~\eqref{eq:risk-min} can be solved only if all distributions $\{\mathcal{P}_i\}_{i\in\calV}$ are known, 
which is not the case in practice.
Thus,
the agents address a deterministic approximation of~\eqref{eq:distr-learning} called empirical risk minimization and characterized as follows.
Agent $i$ owns a private dataset $\data[i] =\{\datapoint{i}{j}\}_{j\in\idxset{i}}$ where each data point is assumed sampled from the local distribution,
namely $\datapoint{i}{j} \sim \calP_i$.
This allows the agent $i$ to approximate its risk function $\mathbb{E}_{d \sim \mathcal{P}_i} \left[ \ell(x_i; d) \right]$ with the empirical risk associated with $\data[i]$ and to seek a learning-based model parameterized by $x_i$ that suitably describes the data in $\data[i]$.\footnote{
	Since we assume that all agents learn the same model (\eg a linear model or a neural network model) and only differ in the model parameters $x_i$,
	in the following we refer to $x_i$ just as ``parameter(s)'' for the sake of simplicity.
}
The local cost $f_i$ is consequently instantiated as
\begin{equation}\label{eq:loss}
	f_i(x_i) = \frac{1}{|\idxset{i}|} \sum_{j\in\idxset{i}} \ell(x_i; \datapoint{i}{j}) + \gamma\rho(x_i).
\end{equation}
The regularization function $\rho$ in~\eqref{eq:loss} is designed to prevent overfitting the training dataset $\data[i]$ and usually chosen as a norm with weight $\gamma>0$.
In words,
minimization of the (regularized) empirical risk~\eqref{eq:loss} aims to make agent $i$ learn a parameter $x_i$ that describes $\data[i]$ and also generalizes to the distribution $\calP_i$,
namely it explains samples $d\sim \calP_i$ not present in $\data[i]$.

Problem~\eqref{eq:distr-learning} can be solved by training one model on the aggregated dataset $\{\data[i]\}_{i\in\calV}$.
However,
privacy concerns and communication constraints prevent the agents from sharing data, ruling out this option.
At the same time,
if each agent trains a model only on its own dataset,
it achieves poor generalization.
These concurrent issues make distributed optimization algorithms suited to the distributed-learning specialization of~\eqref{eq:distr-learning} where the local cost functions $f_i$ are the regularized losses defined in~\eqref{eq:loss}.
The agents solve~\eqref{eq:distr-learning} without exchanging the data $\data[i]$ but only the local parameters $x_i$ or the gradients $\nabla f_i(x_i)$.

To derive formal guarantees of convergence,
we require standard assumptions on convexity and smoothness of the local losses~\cite{Bastianello21ecc-proximalGradientMethod,Yuan16siamjo-convergenceDGD,Xin20spm-DecentralizedStochasticOptimization}.
In the following,
$\|\cdot\|$ denotes the $2$-norm.
\begin{definition}
	Function $f$ is $\mu$-strongly convex if there exists $\mu>0$ such that,
	for all $x$ and $y$,
	it holds
	\begin{equation}
		f(y) \ge f(x) + \nabla f(x)^\top(y - x) + \frac{\mu}{2} \norm{y - x}.
	\end{equation}
\end{definition}

\begin{definition}
	Function $f$ is $L$-smooth if $\nabla f$ is globally Lipschitz with Lipschitz constant $L>0$.
\end{definition}

\begin{ass}
	The local loss $f_i$ in~\eqref{eq:loss} is $\mu$-strongly convex and $L$-smooth for all $i \in \mathcal{V}$.
\end{ass}

\subsubsection*{Connection with federated learning}
The federated learning (FL) setup is the special case where each agent is connected to only one external agent,
called the aggregator.
This receives local models from agents,
combines them into a \emph{global model},
and sends the latter back to the agents,
which refine it with local data,
and the process repeats itself.
While the communication topology of federated learning is a simple star network centered at the aggregator,
this corresponds to a complete graph for~\eqref{eq:distr-learning} whereby the aggregator implements all communication links.

\subsection{What is Personalization in Distributed Learning?}\label{sec:what-personalization}

The distributed learning problem previously introduced treats all agents as if they are homogeneous,
\ie sample data from the same underlying distribution.
Indeed,
problem~\eqref{eq:distr-learning} forces all agents to learn a unique model under the implicit assumption that this will generalize to the distribution $\calP_i$ of each agent $i$.
However,
in practical cases,
the agents both have heterogeneous (non-i.i.d.) data and may want to prioritize their own local distributions.
For instance,
if industrial partners collaboratively train models for data-driven predictive maintenance,
each partner may want a model tailored to its own machines.
This necessity calls for \emph{personalization} of the parameter $x_i$ such that the model learned by agent $i$ does not just generalize to all $\calP_j$'s --- owing to the collaboration with other agents ---  but provides high accuracy specifically on the distribution $\calP_i$.
Of course,
there is a tradeoff between personalization and generalization that depends on inter-agent heterogeneity.

\subsubsection{Personalized federated learning}
In FL,
the global model broadcast by the aggregator can serve as a reference for personalization. 
If the global model differs significantly from a local model,
that agent may prioritize its local loss.
This can be accommodated as proposed in~\cite{Dinh20nips-personalizedFL}. Denoting the global parameters by $x$,
problem~\eqref{eq:distr-learning} is formally redefined as
\begin{subequations}\label{eq:personalization-fl}
	\begin{mini}
		{x\in\Real{n}}
		{\sum_{i\in\calV} F_i(x)}
		{\label{eq:prob-personalized}}
		{}
	\end{mini}
	where the local loss of agent $i$ is replaced by
    \begin{equation}\label{eq:personalized-problem}
		F_i(x) = \min_y \lb f_i(y) + \frac{\pi_i}{2} \norm[2]{y - x} \rb
    \end{equation}
    and the parameter $\pi_i$ quantifies the importance agent $i$ gives the global model.
\end{subequations}
The reformulation~\eqref{eq:personalization-fl} allows agent $i$ to personalize its parameter $x_i$ for its own dataset $\data[i]$ while controlling how different its local model and the global one are through the coefficient $\pi_i$.

\subsubsection{Personalized distributed learning}
The personalization-oriented reformulation~\eqref{eq:personalization-fl} relies on the hierarchical structure of FL,
whereby the aggregator computes and broadcasts the global model $x$ to all agents.
However,
an analogous formulation cannot be constructed for the fully distributed setup we consider,
where agents access only information sent by a few  neighbors.

\begin{prob}[Personalized distributed learning]\label{prob:personalization}
	Given the distributed learning problem~\eqref{eq:distr-learning},
	how to increase the local accuracy of each agent while maintaining high global accuracy?
	Formally,
	how to optimally trade between minimization of the local loss $f_i(x_i)$ of each agent $i$	and the total loss $\sum_{j\in\calV}f_j(x_j)$?
\end{prob}

\subsection{What is Resilience in Distributed Learning?}\label{sec:what-resilience}

The distributed learning setup~\eqref{eq:distr-learning} assumes that all agents are ``honest'' and correctly run the distributed algorithm.
However,
wireless communication allows external malicious agents to interfere and degrade the learned models.
This can be done by attacking (or acting as) an agent and spread misleading information during training,
\eg by sharing noisy parameters $x_i$ to lower accuracy.
Even without malicious attackers,
agents with many outliers or very different data distributions may degrade the distributed training of local models.

Standard approaches for privacy preservation include computationally expensive security mechanisms,
\eg encryption,
or ad-hoc solutions such as perturbing the transmitted parameters.
These approaches do little against disrupting model attacks.
Proactive cybersecurity mechanisms leverage the communication protocol to detect adversaries from transmitted packets,
but this can take quite some time,
neglects the content of messages,
and is not robust to normal agents with outliers in the training data.
Therefore,
there is a need to make distributed training not just secure but also \emph{resilient},
that is,
insensitive to disruptions introduced by adversaries or agents with poor training data without degrading the models learned by normal agents.

\begin{prob}[Resilient distributed learning]\label{prob:resilient-learning}
		Given the distributed learning problem~\eqref{eq:distr-learning},
		how to make the honest agents learn globally accurate parameters $x_i$ with unknown malicious agents?
		Formally,
		let $\calH\subset\calV$ denote honest agents,
		then we are interested in solving the problem
		\begin{mini}
			{\{x_i\}_{i\in\calH}}
			{\sum_{i\in\calH} f_i(x_i)}
			{\label{eq:distr-learning-resilient}}
			{}
			\addConstraint{x_i}{=x_j}{\ \forall i,j\in\calH.}
		\end{mini}
\end{prob}

\subsection{The Affinity Between Personalization and Resilience}\label{sec:affinity}

As motivated in the previous sections,
personalization and resilience respond to two different needs.
On the one hand,
an agent personalizes its own model to achieve high local accuracy,
possibly at the cost of lower accuracy on other agents' distributions.
On the other hand,
model attacks urge the honest agents to implement resilient mechanisms that prevent disruption of the cooperative training.

Nonetheless,
personalization and resilience share similarities.
They both reduce the emphasis on collaboration and focus on each individual agent to either enhance local accuracy (personalization) or reduce uncertainty in the source of received information (resilience).
In both scenarios each agent regards all other agents as potentially harmful to its own learning goal:
on the one hand,
the other agents may have different data distributions,
introducing undesired variance to the model;
on the other hand,
the identity of malicious agents is unknown,
forcing honest agents to be suspicious of their neighbors.

This discussion motivates the quest for a common mechanism that can accommodates both objectives,
which we propose and evaluate in the next sections.

\section{Algorithm Design}\label{sec:algorithm-design}

In this section,
we devise our distributed learning algorithm tailored to \cref{prob:personalization,prob:resilient-learning}.
We consider gradient-based algorithms consistently with a large portion of the machine learning literature.
Motivated by the discussion in \autoref{sec:affinity},
our design aims to optimally trade local accuracy (of each agent) for global accuracy (of all agents) to achieve personalized and resilient models.
We first compare gradient tracking (GT) and DGD in \autoref{sec:personalization-inaccuracy}.
DGD achieves higher local accuracy than GT and performs better with noisy training.
However,
this behavior of DGD naturally emerges as a by-product of its intrinsic suboptimality and cannot be tuned.
Therefore,
in \autoref{sec:personalization-fj} we combine DGD with the FJ model,
which captures stubbornness in opinion dynamics.
This model introduces an additional scalar parameter which we use to tune the learned models towards either local or global accuracy and account for variable requirements.

\subsection{DGD: Advantages of Inaccuracy}\label{sec:personalization-inaccuracy}

Consider the standard iterate of distributed gradient descent
\begin{equation}\label{eq:dgd}\tag{DGD}
	\b{x}_{k+1} = W\b{x}_k - \alpha\nabla f(\b{x}_k)
\end{equation}
where the bold symbol $\b{x}\in\Real{Nn}$ stacks the parameters of all agents and $f(\b{x}_k)$ stacks all local losses,
each evaluated at the corresponding agent's parameter.
The matrix $W$ is doubly stochastic and can be easily built in practice,
\eg with Metropolis weights.

It is well known that~\eqref{eq:dgd} does not solve problem~\eqref{eq:distr-learning} but achieves a neighborhood of the globally optimal model.
The size of this neighborhood depends on the heterogeneity of the local losses and is larger with more varied training data.
The next result quantifies the distance between the learned parameters and the optimum of~\eqref{eq:distr-learning}.

\begin{prop}\label{prop:dgd-fixed-point}
	Let $x^* = \argmin_{x \in \Real{n}} \sum_{i = 1}^N f_i(x)$ and $\bar{\b{x}}$ be the fixed point of \eqref{eq:dgd}.
	Then the following bound holds
	\begin{equation}\label{eq:dgd-gap}
		\norm{\bar{\b{x}} - \b{1} \otimes x^*} \leq \calO\!\lr c + \alpha D  \rr		
	\end{equation}
	where $c$ is a constant offset and
	\begin{equation}
		D = \sqrt{2 L \sum_{i = 1}^N \lr f_i(x^*) - f_i(x_i^*)\rr}
	\end{equation}
	with $x_i^* \doteq \argmin_{x \in \Real{n}} f_i(x)$.
\end{prop}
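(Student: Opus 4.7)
The plan is to leverage the fixed-point equation
$(I - W)\bar{\b{x}} = -\alpha \nabla f(\bar{\b{x}})$
and to decompose the error $\bar{\b{x}} - \b{1} \otimes x^*$ into a consensus (average) component along $\b{1}$ and an orthogonal disagreement component, bounding each separately. Let $\Pi \doteq \frac{1}{N}\b{1}\b{1}^\top \otimes I_n$ denote the consensus projector and $\bar{x} \doteq \frac{1}{N}\sum_{i=1}^N \bar{x}_i$ the agents' average parameter.

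First, I would bound the disagreement $(I - \Pi)\bar{\b{x}}$. Pre-multiplying the fixed-point equation by $I - \Pi$ and using that $\Pi$ commutes with $W$ (since $W$ is doubly stochastic and $W\b{1} = \b{1}$), the operator $I - W$ restricted to $\mathrm{range}(I - \Pi)$ has smallest singular value $1 - \sigma_2(W) > 0$, giving $\|(I - \Pi)\bar{\b{x}}\| \leq \frac{\alpha}{1 - \sigma_2(W)} \|\nabla f(\bar{\b{x}})\|$. Next, I would bound the consensus error $\|\bar{x} - x^*\|$. Pre-multiplying the fixed-point equation by $\Pi$ kills the $(I - W)$ term and yields $\sum_{i=1}^N \nabla f_i(\bar{x}_i) = 0$, which I would contrast with the optimality condition $\sum_{i=1}^N \nabla f_i(x^*) = 0$. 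Using $L$-smoothness gives $\|\nabla f_i(\bar{x}_i) - \nabla f_i(\bar{x})\| \leq L \|\bar{x}_i - \bar{x}\|$, so the aggregate gradient at $\bar{x}$ is controlled by the disagreement just bounded; strong convexity of $\sum_i f_i$ with constant $N\mu$ then converts the resulting gradient residual into a bound on $\|\bar{x} - x^*\|$.

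To close the argument, I would control $\|\nabla f(\bar{\b{x}})\|$ through smoothness as
$\|\nabla f(\bar{\b{x}})\| \leq \|\nabla f(\b{1} \otimes x^*)\| + L \|\bar{\b{x}} - \b{1} \otimes x^*\|$.
The key ingredient for the heterogeneity term is the standard smoothness inequality $\|\nabla f_i(x^*)\|^2 \leq 2L \lr f_i(x^*) - f_i(x_i^*) \rr$, valid for any $L$-smooth function evaluated away from its minimizer $x_i^*$. Summing over $i$ delivers $\|\nabla f(\b{1} \otimes x^*)\| \leq D$, so that the heterogeneity measure $D$ appears exactly as in the statement. Combining the consensus and disagreement bounds via the triangle inequality yields a self-referential inequality for $\|\bar{\b{x}} - \b{1} \otimes x^*\|$, which I would solve by absorbing the $L \|\bar{\b{x}} - \b{1} \otimes x^*\|$ term into the left-hand side, obtaining the desired $\calO(c + \alpha D)$ bound, with $c$ collecting the spectral-gap and conditioning constants that do not scale with $\alpha$.

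The main obstacle I anticipate is precisely this self-referential structure: the bound on $\|\bar{\b{x}} - \b{1} \otimes x^*\|$ is coupled to $\|\nabla f(\bar{\b{x}})\|$, which in turn depends on $\|\bar{\b{x}} - \b{1} \otimes x^*\|$ itself. This forces an implicit smallness requirement on $\alpha$ relative to $L$, $\mu$, and $1 - \sigma_2(W)$ so that the fixed-point inequality admits a meaningful resolution; care is then needed to propagate the constants cleanly so that the dependence on the heterogeneity measure $D$ does not get inflated by the network conditioning or the problem smoothness.
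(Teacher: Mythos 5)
Your proof is correct, but it takes a genuinely different route from the paper's. The paper's argument is essentially a citation: it invokes Theorem~7 of Yuan et al.\ (2016) to obtain $\|\bar{\b{x}} - \b{1}\otimes x^*\| \le \calO\!\lr \alpha D'\rr$ with $D' = \sqrt{2L\sum_{i}\lr f_i(0) - f_i(x_i^*)\rr}$ --- a quantity tied to the zero initialization used in that reference --- and then simply splits $D' \le c + D$ by adding and subtracting $f_i(x^*)$ inside the sum and using $\sqrt{a+b}\le\sqrt{a}+\sqrt{b}$, so that the constant $c = \sqrt{2L\sum_i \lr f_i(0)-f_i(x^*)\rr}$ is an artifact of that initialization-dependent bound. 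Your first-principles derivation from the fixed-point identity $(I-W)\bar{\b{x}} = -\alpha\nabla f(\bar{\b{x}})$ --- spectral-gap bound on the disagreement, strong convexity of the aggregate cost for the average component, the inequality $\|\nabla f_i(x^*)\|^2 \le 2L\lr f_i(x^*)-f_i(x_i^*)\rr$ to produce exactly $D$, and absorption of the self-referential term for $\alpha$ sufficiently small --- is sound, and in fact yields the sharper conclusion $\calO(\alpha D)$ with no additive offset, precisely because the fixed point itself carries no memory of the initialization. What the paper's route buys is brevity and, via the cited theorem, the guarantee that DGD actually converges to $\bar{\b{x}}$ (which your argument takes as given from the statement); what yours buys is self-containedness and the elimination of $c$, at the price of an explicit smallness condition on $\alpha$ relative to $L$, $\mu$, and the spectral gap of $W$, which you correctly identify as the delicate point.
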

\begin{proof}
	By \cite[Theorem~7]{Yuan16siamjo-convergenceDGD} we know that \eqref{eq:dgd} converges to the point $\bar{\b{x}}$. Additionally, we have that
	$$
	\norm{\bar{\b{x}} - \b{1} \otimes x^*} \leq \calO\!\lr \alpha D' \rr
	$$
	where
	$$
	D' = \sqrt{2 L \sum_{i = 1}^N \lr f_i(0) - f_i(x_i^*)\rr}.
	$$
	Summing and subtracting $f_i(x^*)$, rearranging, and using $\sqrt{a + b} \leq \sqrt{a} + \sqrt{b}$ if $a, b \geq 0$, we get
	$$
	D' \leq c + D
	$$
	with $c \doteq \sqrt{2 L \sum_{i = 1}^N \lr f_i(0) - f_i(x^*)\rr}$.
\end{proof}

The constant $c$ in the bound~\eqref{eq:dgd-gap} does not depend on the cost functions.
On the other hand,
if the agents' losses are very heterogeneous,
the global minimizer $x^*$ and each local minimizer $x_i^*$ differ widely,
causing $D$ to increase.

This behavior of~\eqref{eq:dgd} is undesired in the nominal case where one aims to solve~\eqref{eq:distr-learning} and achieve global optimality.
It turns useful to improve on local accuracy of each agent instead,
because the updates~\eqref{eq:dgd} converge near both the global optimum $x^*$ and the minimizers of the local losses $f_i$.

\subsubsection*{Numerical evaluation}\label{sec:personalization-dgd-experiments}

\begin{table}
	\centering
	\caption{Accuracy with gradient tracking (ED) and distributed gradient descent (DGD and ATC).}
	\label{tab:personalization-dgd-v-ed-test}
	\begin{tabular}{ccccc}
		\toprule
		& Dataset & Min & Mean $\pm$ Std & Max \\
		\midrule
		\multirow{2}*{ED}  & global & $0.934$ & $0.934 \pm 0.000$     & $0.934$ \\
		& local 	& $0.640$ & $0.934 \pm 0.110$ & $1.000$ \\
		\midrule
		\multirow{2}*{\ref{eq:dgd}} & global & $0.848$ & $0.876 \pm 0.026$ & $0.918$ \\
		& local 	& $0.840$ & $0.974 \pm 0.054$ & $1.000$ \\
		\midrule
		\multirow{2}*{\ref{eq:atc}} & global & $0.868$ & $0.891 \pm 0.021$ & $0.928$ \\
		& local 	& $0.840$ & $0.966 \pm 0.063$ & $1.000$ \\
		\bottomrule
	\end{tabular}
\end{table}

\begin{table}
	\centering
	\caption{Accuracy with gradient tracking (ED) and distributed gradient descent (DGD and ATC) with noisy updates.}
	\label{tab:personalization-dgd-v-ed-test-noise}
	\begin{tabular}{ccccc}
		\toprule
		& Dataset & Min & Mean $\pm$ Std & Max \\
		\midrule
		\multirow{2}*{ED}  & global & $0.502$ & $0.506 \pm 0.003$     & $0.512$ \\
		& local 	& $0.020$ & $0.502 \pm 0.274$ & $1.000$ \\
		\midrule
		\multirow{2}*{\ref{eq:dgd}} & global & $0.604$ & $0.674 \pm 0.052$ & $0.758$ \\
		& local 	& $0.400$ & $0.686 \pm 0.194$ & $1.000$ \\
		\midrule
		\multirow{2}*{\ref{eq:atc}} & global & $0.520$ & $0.627 \pm 0.051$ & $0.704$ \\
		& local 	& $0.000$ & $0.676 \pm 0.320$ & $1.000$ \\
		\bottomrule
	\end{tabular}
\end{table}

To concretely assess potential advantages of DGD methods,
we consider a binary classification task on synthetic data.\footnote{
	Datasets, losses, and models are described in \autoref{sec:experiments}, \cref{ex:bin-class}.}
We compare~\eqref{eq:dgd} with exact diffusion (ED),
a gradient tracking algorithm proposed in~\cite{Yuan19tsp-ExactDiffusion} that solves~\eqref{eq:distr-learning},
and Adapt-Then-Combine (ATC), whose $k$th iterate is
\begin{equation}\label{eq:atc}\tag{ATC}
	\b{x}_{k+1} = W(\b{x}_k - \alpha\nabla f(\b{x}_k)).
\end{equation}
\blue{The classification accuracy of model $x$ on dataset $\data$ is defined as
\begin{equation}\label{eq:accuracy}
	\text{accuracy} \doteq 1 - \frac{|\{(\feat{}{},\lab{}{})\in\data:\hat{\lab{}{}}_{x}(\feat{}{})\neq\lab{}{}\}|}{|\data|},
\end{equation}
where $\hat{\lab{}{}}_{x}(\feat{}{})$ is the label output by model $x$ given features $\feat{}{}$.
\Cref{tab:personalization-dgd-v-ed-test} reports accuracy statistics across the $N$ agents.
For agent $i$,
local accuracy is computed on test dataset $\data[i]^{\textnormal{test}}$ and global accuracy on combined dataset $\cup_{i\in\calV}\data[i]^\textnormal{test}$.
The values show that~\eqref{eq:dgd} outperforms ED on local accuracy,
because the latter converges to the minimizer of the global loss and cannot accommodate for heterogeneous datasets.
Algorithm~\eqref{eq:atc} places between~\eqref{eq:dgd} and ED in terms of average accuracy.}

Notably,
gradient tracking is sensitive to noise.
In distributed learning,
this clashes against noisy gradients (\eg in stochastic gradient descent),
wireless channel erasure,
or messages sent by malicious agents.
\blue{\cref{tab:personalization-dgd-v-ed-test-noise} reports accuracy scores computed as in \cref{tab:personalization-dgd-v-ed-test} but with algorithm updates additively perturbed by i.i.d. Gaussian noise drawn from $\calN(0,2)$.}
The accuracy of ED drops significantly,
whereas~\eqref{eq:dgd} and~\eqref{eq:atc} are more robust as shown by the smaller decreases in average accuracy.

\blue{We conclude that distributed gradient descent methods improve over exact gradient tracking both on local accuracy and with noisy updates,
suggesting opportunities for enhancing personalization and resilience in light of the discussion in \autoref{sec:affinity}.}
However,
neither~\eqref{eq:atc} nor~\eqref{eq:dgd} provide ways to compare local models with a ``global'' consensus-based model,
as opposed to gradient tracking,
and offer no easy way to tune local or global accuracy reached by each agent. 
We next address this issue through an opinion dynamics model that structurally accounts for heterogeneous agents.

\subsection{FJ-DGD: Harnessing Stubbornness for Local Accuracy}\label{sec:personalization-fj}

We draw inspiration from the Friedkin-Johnsen (FJ) model to tune between global and local accuracy achieved by learned models and enhance personalization and resilience.
The original FJ model tracks dynamic evolution of opinions $\b{x}$ starting from an initial condition $\b{x}_0$,
assuming that the agents are partially stubborn and retain their initial opinions throughout~\cite{Friedkin90jms-fjmodel,Proskurnikov17arc-tutorialModelingSocialNet}:
\begin{equation}\label{eq:fj}
	\b{x}_{k+1} = (I - \Lambda)W\b{x}_k + \Lambda\b{x}_0.
\end{equation}
Matrix $\Lambda$ is diagonal and its $i$th diagonal element $\lambda_i\in[0,1]$ represents the stubbornness of agent $i$.
The case with $\lambda_i \equiv 0$ reduces~\eqref{eq:fj} to the consensus algorithm,
whereas arbitrary stubbornness parameters $\lambda_i>0$ prevent a consensus apart from trivial cases.
In words,
agents behaving according to~\eqref{eq:fj} embed the others’ opinions (through the consensus term $W\b{x}_k$) but always mediate with their own initial opinion.
The closer $\lambda_i$ is to $1$,
the less the final opinion $x_i$ is affected by the others.

\blue{This offers a simple workaround \blue{to improve local accuracy} by letting each agent retain local information at all times.
	Intuitively,
	the FJ model can purposely bias the local model $x_i$ of agent $i$ towards the optimal one for local data $\data[i]$,
	whereby the closer $\lambda_i$ to $1$,
	the stronger such as bias.}
A straightforward implementation of this intuition yields the following algorithm
\begin{equation}\label{eq:fj-dgd-naive}
	\b{x}_{k+1} = \Lambda\b{x}^* + (I - \Lambda)(W\b{x}_k + \alpha\nabla f(\b{x}_k)),
\end{equation}
where $\b{x}^* \doteq [(x_1^*)^\top, \ldots, (x_N^*)^\top]^\top$ and $x_i^* = \argmin_{x} f_i(x)$ is the local optimizer of agent $i$.
This strategy requires the agents to spend extra time to pre-compute the local optimal parameters $x_i^*$ before collaborative training,
with the added complication of synchronizing the start.
To avoid this issue,
we let the agents concurrently update the local models during collaborative training.
This results in the following algorithm,
whereby agent $i$ tracks it's local minimizer $x_i^*$ with variable $y_i$
\begin{equation}\label{eq:fj-dgd}\tag{FJ-DGD-1}
	\begin{aligned}
		\b{y}_{k+1} &= \b{y}_{k} - \alpha\nabla f(\b{y}_k), \qquad \b{y}_{0} = \b{x}_{0}\\
		\b{x}_{k+1} &= \Lambda\b{y}_{k+1} + (I - \Lambda)(W\b{x}_k-\alpha\nabla f(\b{x}_k)).
	\end{aligned}
\end{equation}
Since $\b{y}_k$ converges to $\b{x}^*$,
both~\eqref{eq:fj-dgd-naive} and~\eqref{eq:fj-dgd} have the same fixed point.
Algorithm~\eqref{eq:fj-dgd} combines the consensus-based aggregation of DGD and the stubbornness of the FJ model to learn models $ x_i$, 
which are influenced by both  behaviors throughout the training. 
Alternatively, 
the stubborn and consensus-based models may be separately computed and combined at a second stage.
This yields the following algorithm,
whereby each agent $i$ uses two extra variables $y_i$ and $z_i$
\begin{equation}\label{eq:fj-dgd-2}\tag{FJ-DGD-2}
	\begin{aligned}
		\b{y}_{k+1} &= \b{y}_{k} - \alpha\nabla f(\b{y}_k), \qquad \b{y}_{0} = \b{x}_{0}\\
		\b{z}_{k+1} &= W\b{z}_k-\alpha\nabla f(\b{z}_k), \qquad \b{z}_{0} = \b{x}_{0}\\
		\b{x}_{k+1} &= \Lambda\b{y}_{k+1} + (I - \Lambda)\b{z}_{k+1}.
	\end{aligned}
\end{equation}
In \cref{sec:convergence-analysis} we  analyze  convergence of~\eqref{eq:fj-dgd-2}.
This choice is due to the higher flexibility of~\eqref{eq:fj-dgd-2}, whose $\b{y}$- and $\b{z}$-updates are independent.
A similar analysis can be carried out for~\eqref{eq:fj-dgd}.

\subsection{FJ-DGD With Corrupted Updates}
\label{sec:resilience}

The previous section considers a nominal collaborative scenario where all agents truthfully obey the designed algorithm.
We now extend this case to the scenario where some agents do not exactly follow the designed update rule.
In particular,
this model can capture malicious agents that transmit noisy or deceiving models to disrupt the collaborative training carried out by the other agents.
In light of the affinity between personalization and resilience discussed in \autoref{sec:affinity},
FJ-DGD can be tuned to achieve good global accuracy in this case.
Intuitively,
if the agents reduce collaboration in a controlled manner,
they mitigate the effect of malicious or low-quality information they receive.
In the following,
we focus on~\eqref{eq:fj-dgd-2} for the sake of conciseness.

Let $\mathcal{H}$ and $\mathcal{M}$ be the sub-sets of honest and malicious agents, respectively,
such that $\mathcal{H} \cap \mathcal{M} = \emptyset$ and $\mathcal{V} = \mathcal{H} \cup \mathcal{M}$.
We assume that malicious agents participating in the learning process freely choose $\b{z}$ in the updates of~\eqref{eq:fj-dgd-2} to \blue{degrade performance of honest agents' local models}.
Define the vector $\b{e}_k$ such that
\begin{equation}
	[ \b{e}_k ]_i = 
	\begin{cases}
		0 & \text{if} \ i \in \mathcal{H} \\
		e_{i,k} & \text{if} \ i \in \mathcal{M}.
	\end{cases}
\end{equation}
\blue{Error vector $\b{e}_k$ represents the deviation from nominal updates caused by malicious behavior.}
From the perspective of agent $i\in\calH$ \blue{(which ignores whether $e_{j,k}=0$ or $e_{j,k}\neq0$ for any neighbor $j$)},
the updates read
\begin{multline}
	z_{i,k+1} = w_{ii} z_{i,k} 
	+ \sum_{j \in \mathcal{N}_i \cap \mathcal{H}} w_{ij} z_{j,k} \\
	+ \sum_{j \in \mathcal{N}_i \cap \mathcal{M}} w_{ij} (z_{j,k} + e_{j,k}) 
	- \alpha \nabla f_i(z_{i,k}),
\end{multline}
and replacing the $\b{z}$-update of~\eqref{eq:fj-dgd-2} yields
\begin{equation}\label{eq:fj-dgd-noise}\tag{FJ-DGD-N}
	\begin{aligned}
		\b{y}_{k+1} &= \b{y}_{k} - \alpha\nabla f(\b{y}_k), \quad \b{y}_{0} = \b{x}_{0}\\
		\b{z}_{k+1} &= W\b{z}_k-\alpha\nabla f(\b{z}_k) + W \b{e}_k, \quad \b{z}_{0} = \b{x}_{0}\\
		\b{x}_{k+1} &= \Lambda\b{y}_{k+1} + (I - \Lambda)\b{z}_{k+1}.
	\end{aligned}
\end{equation}
This modified version behaves worse because of the unmodeled disturbances in $\b{e}_k$,
which may not relate to the algorithm itself.
The following section is dedicated to the performance analysis of~\eqref{eq:fj-dgd-2} and of its corrupted version~\eqref{eq:fj-dgd-noise}.

\section{Convergence Analysis}
\label{sec:convergence-analysis}

We propose two main results.
The first one quantifies the final models learned by~\eqref{eq:fj-dgd-2} within a fully collaborative setup,
along with the speed of convergence.
This gives an indication of how long training is required.
Recall that we use the notation $\b{x}^* \doteq [(x_1^*)^\top, \ldots, (x_N^*)^\top]^\top$, with $x_i^* = \argmin_{x} f_i(x)$ the local optimizer of agent $i$.

\begin{thm}\label{prop:convergence-fj-dgd}
	Let $\bar{\b{x}}$ be the fixed point of~\eqref{eq:dgd}.
	Algorithm~\eqref{eq:fj-dgd-2} converges linearly to $\Lambda \b{x}^* + (I - \Lambda) \bar{\b{x}}$ at a rate of
	\begin{equation}\label{eq:conv-rate}
		\zeta = \max\left\{ |1 - \alpha \mu|, |1 - \alpha L|, |\lambda_{\min}(W) - \alpha L| \right\}
	\end{equation}
	where $\lambda_{\min}(W)$ is the smallest eigenvalue of $W$.
    \begin{proof}
        The result is a consequence of the following facts:
        \begin{itemize}
            \item $\b{y}_{k+1} = \b{y}_{k} - \alpha\nabla f(\b{y}_k)$ converges linearly to $\b{x}^*$, at a rate $\zeta' = \max\left\{ |1 - \alpha \mu|, |1 - \alpha L| \right\}$;
            
            \item \eqref{eq:dgd} converges linearly to $\bar{\b{x}}$, at a rate $\zeta'' = \max\left\{ |1 - \alpha \mu|, |\lambda_{\min}(W) - \alpha L|\right\}$.
        \end{itemize}
        Combining the items above yields the rate in~\eqref{eq:conv-rate}.
    \end{proof}
\end{thm}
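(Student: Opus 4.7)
The overall plan is to exploit the fact that the $\b{y}$- and $\b{z}$-updates in~\eqref{eq:fj-dgd-2} are completely decoupled and can be studied in isolation, and then combine their asymptotic behavior through the linear aggregation rule $\b{x}_{k+1} = \Lambda \b{y}_{k+1} + (I - \Lambda) \b{z}_{k+1}$. Since the stacked cost $f(\b{x}) = \sum_{i} f_i(x_i)$ inherits $\mu$-strong convexity and $L$-smoothness from the per-agent losses, and is separable across agents, its unique minimizer is exactly $\b{x}^*$ as defined in the theorem. I would first establish the linear contractions for each sub-iteration, and then glue them with the triangle inequality.

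The first step is the $\b{y}$-recursion, which is plain centralized gradient descent on the separable, $\mu$-strongly convex and $L$-smooth cost $f$. A classical descent-lemma argument combined with co-coercivity gives $\|\b{y}_{k+1} - \b{x}^*\| \le \zeta' \|\b{y}_k - \b{x}^*\|$ with $\zeta' = \max\{|1 - \alpha\mu|, |1 - \alpha L|\}$, provided $\alpha \in (0, 2/L)$. The second step is the $\b{z}$-recursion, which is exactly~\eqref{eq:dgd}; by \cite[Theorem~7]{Yuan16siamjo-convergenceDGD} this iteration admits a unique fixed point $\bar{\b{x}}$ and the contraction $\|\b{z}_{k+1} - \bar{\b{x}}\| \le \zeta'' \|\b{z}_k - \bar{\b{x}}\|$ holds with $\zeta'' = \max\{|1 - \alpha\mu|, |\lambda_{\min}(W) - \alpha L|\}$, where the mixing matrix $W$ contributes the second term (the appearance of $\lambda_{\min}(W)$ is the distinctive aspect of DGD versus centralized GD).

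The third step is trivial assembly: writing $\b{x}_k - (\Lambda \b{x}^* + (I - \Lambda) \bar{\b{x}}) = \Lambda(\b{y}_k - \b{x}^*) + (I - \Lambda)(\b{z}_k - \bar{\b{x}})$ and using $\|\Lambda\|, \|I - \Lambda\| \le 1$ together with the triangle inequality yields
\begin{equation*}
\|\b{x}_k - (\Lambda \b{x}^* + (I - \Lambda) \bar{\b{x}})\| \le \|\b{y}_k - \b{x}^*\| + \|\b{z}_k - \bar{\b{x}}\|,
\end{equation*}
so linear convergence of both components at rates $\zeta'$ and $\zeta''$ propagates linearly to $\b{x}_k$ at the worst of the two rates, which coincides with $\zeta$ in~\eqref{eq:conv-rate} after merging the $|1 - \alpha\mu|$, $|1 - \alpha L|$, and $|\lambda_{\min}(W) - \alpha L|$ terms.

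The only non-routine piece is the $\b{z}$-contraction, which I would simply cite from~\cite{Yuan16siamjo-convergenceDGD} rather than reprove; everything else reduces to a one-line triangle-inequality argument on the aggregation step, so I expect no real obstacle beyond verifying that the step-size range implicitly required by $\zeta' < 1$ and $\zeta'' < 1$ is consistent (namely $\alpha < 2/L$ together with the admissibility condition on $\alpha$ from the DGD reference), which can be stated as an implicit hypothesis of the theorem.
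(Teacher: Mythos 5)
Your proposal is correct and follows essentially the same route as the paper: decouple the $\b{y}$- and $\b{z}$-recursions, invoke linear convergence of centralized gradient descent on the separable cost and of \eqref{eq:dgd} (via \cite{Yuan16siamjo-convergenceDGD}) respectively, and combine them through the aggregation step. Your version merely makes explicit the triangle-inequality assembly and the step-size caveat that the paper's one-line "combining the items above" leaves implicit.
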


\cref{prop:convergence-fj-dgd} shows that algorithm~\eqref{eq:fj-dgd-2} enjoys a geometric convergence rate.
The final models are convex combinations between the global optimum $x^*$ and the models $\bar{x}_i$ learned with~\eqref{eq:dgd}.
This allows a designer to easily tune the behavior of~\eqref{eq:fj-dgd-2} based on how much local accuracy is preferred over global accuracy.
In \autoref{sec:experiments},
we show how to use this to enforce personalization or help resilience. 
We also remark that the choice of $\Lambda$ here does not affect the convergence rate of the algorithm,
but only its fixed point.
Indeed, the convergence rate is fully characterized by the separate convergence rates of the $\b{y}$- and $\b{z}$-updates, which do not depend on $\Lambda$.
The fixed point though does depend on it, as it is characterized as the convex combination $\Lambda \b{x}^* + (I - \Lambda) \bar{\b{x}}$.

The next result evaluates the convergence of~\eqref{eq:fj-dgd-noise} with noisy terms $e_{i,k}$ injected by malicious agents.
\blue{We assume that these cannot transmit arbitrary messages and bound their magnitude.
This assumption is not restrictive because,
if honest agents receive inappropriately large model weights,
they can easily detect malicious agents and exclude them from training.}

\begin{thm}\label{thm:convergence-noisy}
    Let $\| \b{e}_k \| \leq \tau$.
	Algorithm~\eqref{eq:fj-dgd-noise} converges to a neighborhood of $\hat{\b{x}} = \Lambda \b{x}^* + (I - \Lambda) \bar{\b{x}}$ characterized by
	\begin{equation}\label{eq:neighborhood}
		\| \b{x}_k - \hat{\b{x}} \| \leq \zeta^k \| \b{x}_0 - \hat{\b{x}} \| + (1 - \min_i \lambda_i) \tau \sum_{h = 0}^{k-1} \zeta^{k - h - 1}
	\end{equation}
    at a rate of $\zeta$, defined in~\eqref{eq:conv-rate}.
    \begin{proof}
        The goal is to provide a bound to the distance $\| \b{x}_k - \b{\hat{x}} \|$, with $\Lambda \b{x}^* + (I - \Lambda) \bar{\b{x}}$ the fixed point of~\eqref{eq:fj-dgd-2};
        see \cref{prop:convergence-fj-dgd}.
        Using the characterization of the update~\eqref{eq:fj-dgd-noise} and the triangle inequality yields
        \begin{align*}
            \| \b{x}_{k+1} - \b{\hat{x}} \| &\leq \| \Lambda \b{y}_{k+1} + (I-\Lambda) \left(W \b{z}_k - \alpha \nabla f(\b{z}_k) \right) - \b{\hat{x}} \| \\ & \quad + \| (I-\Lambda) W \b{e}_k \| \\
            &\overset{(i)}{\leq} \zeta \| \b{x}_k - \b{\hat{x}} \| + \| (I-\Lambda) W \b{e}_k \|
        \end{align*}
        where $(i)$ holds by contractiveness of~\eqref{eq:fj-dgd-2}.
        Finally, by sub-multiplicativity of the norm we have $\| (I-\Lambda) W \b{e}_k \| \leq \| (I-\Lambda)\| \| W \| \| \b{e}_k \|$ and using $\| I-\Lambda \| = 1 - \min_i \lambda_i$, $\| W \| = 1$, $\| \b{e}_k \| \leq \tau$, we have
        \begin{equation}\label{eq:error-noisy-update}
            \| \b{x}_{k+1} - \b{\hat{x}} \| \leq \zeta \| \b{x}_k - \b{\hat{x}} \| + (1 - \min_i \lambda_i) \tau,
        \end{equation}
        and iterating~\eqref{eq:error-noisy-update} over time yields the result.
    \end{proof}
\end{thm}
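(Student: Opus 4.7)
The plan is to leverage the linear contractiveness of FJ-DGD-2 toward its fixed point $\hat{\b{x}} = \Lambda\b{x}^* + (I-\Lambda)\bar{\b{x}}$, already established in \cref{prop:convergence-fj-dgd}, and track the extra error that the adversarial messages introduce into FJ-DGD-N. First I would observe that~\eqref{eq:fj-dgd-noise} coincides with~\eqref{eq:fj-dgd-2} except for the additive perturbation $W\b{e}_k$ in the $\b{z}$-update, while the $\b{y}$-update is unchanged; hence $\hat{\b{x}}$ remains the natural reference against which to measure $\b{x}_k$.

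Next I would derive a one-step affine recursion on the error $\|\b{x}_k - \hat{\b{x}}\|$. Substituting the $\b{y}$- and $\b{z}$-updates of~\eqref{eq:fj-dgd-noise} into the $\b{x}$-update and isolating the corruption, the triangle inequality splits $\|\b{x}_{k+1} - \hat{\b{x}}\|$ into a nominal piece, bounded by $\zeta\|\b{x}_k - \hat{\b{x}}\|$ via \cref{prop:convergence-fj-dgd}, and a residual piece $\|(I-\Lambda)W\b{e}_k\|$. Sub-multiplicativity, together with $\|I-\Lambda\| = 1 - \min_i \lambda_i$ (since $I-\Lambda$ is diagonal with entries in $[0,1]$), $\|W\| = 1$ (doubly stochastic), and the standing hypothesis $\|\b{e}_k\| \leq \tau$, delivers
\begin{equation*}
\|\b{x}_{k+1} - \hat{\b{x}}\| \leq \zeta\,\|\b{x}_k - \hat{\b{x}}\| + (1 - \min_i \lambda_i)\,\tau.
\end{equation*}
Unrolling this scalar recursion $k$ times and collecting the geometric sum produces~\eqref{eq:neighborhood} directly.

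The step I expect to require most care is justifying the contractiveness bound on the nominal piece. In~\eqref{eq:fj-dgd-2} the sequences $\b{y}_k$ and $\b{z}_k$ evolve independently, each driven only by its own predecessor plus the common initialization $\b{y}_0 = \b{z}_0 = \b{x}_0$, so the scalar rate $\zeta$ must be transported from the product dynamics on $(\b{y},\b{z})$ to the aggregated error $\b{x}_k - \hat{\b{x}} = \Lambda(\b{y}_k - \b{x}^*) + (I-\Lambda)(\b{z}_k - \bar{\b{x}})$. I would handle this by bounding $\|\b{y}_k - \b{x}^*\|$ and $\|\b{z}_k - \bar{\b{x}}\|$ separately at rates $\zeta'$ and $\zeta''$ from \cref{prop:convergence-fj-dgd}, then recombining via the triangle inequality and the fact that $\Lambda$ and $I-\Lambda$ are non-expansive diagonal matrices, to obtain a uniform contraction at rate $\zeta = \max\{\zeta',\zeta''\}$. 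The same decomposition survives under perturbations because $\b{y}_k$ is unaffected by $\b{e}_k$ and the noise enters $\b{z}_k$ linearly, so the clean contraction of the $\b{y}$-component and the gradient-descent part of the $\b{z}$-component are preserved, and only the additive $(I-\Lambda)W\b{e}_k$ term must be carried along separately.
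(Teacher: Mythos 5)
Your proposal is correct and follows essentially the same route as the paper's proof: split the noisy update via the triangle inequality into a nominal piece contracted at rate $\zeta$ (by \cref{prop:convergence-fj-dgd}) plus the residual $\|(I-\Lambda)W\b{e}_k\|$, bound the latter by $(1-\min_i\lambda_i)\tau$ using sub-multiplicativity, and unroll the resulting scalar recursion. Your added care in transporting the rate $\zeta$ from the separate $\b{y}$- and $\b{z}$-dynamics to the aggregated error $\b{x}_k-\hat{\b{x}}$ makes explicit a step the paper compresses into ``by contractiveness of~\eqref{eq:fj-dgd-2},'' but it is the same argument.
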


\Cref{thm:convergence-noisy} proves that~\eqref{eq:fj-dgd-noise} converges inexactly, due to the noise injected by the malicious agents, to a neighborhood whose radius is upper bounded by $(1 - \min_i \lambda_i) \tau / (1 - \zeta)$.
Honest agents can regulate their $\lambda_i$'s to attenuate the impact of malicious agents, with the extreme choice of $\lambda_i = 1$ leading to all agents isolating and performing local training only.

\section{Numerical Experiments}
\label{sec:experiments}

We now test our algorithm on distributed learning tasks where we wish to impose personalization and resilience.
While these two needs share similarities as previously discussed,
they correspond to two different scenarios --- fully collaborative heterogeneous agents \emph{vs.} some non-collaborative or adversarial agents whose identity is unknown.
Therefore,
we perform two different numerical tests.
Also, this allows us to better identify the effects of our algorithm for personalization and resilience.

\subsection{Evaluating Personalization}\label{sec:personalization-fj-experiments}

\begin{figure*}
	\centering
	\subfloat[Local training accuracy.
	\label{fig:fj-djd-loc-train}]{\resizebox{.25\textwidth}{!}{

\begin{tikzpicture}[yscale=1.07]
	
	\definecolor{color0}{rgb}{0.298039215686275,0.447058823529412,0.690196078431373}
	\definecolor{color1}{rgb}{0.866666666666667,0.517647058823529,0.32156862745098}
	\definecolor{color2}{rgb}{0.333333333333333,0.658823529411765,0.407843137254902}
	\definecolor{color3}{rgb}{0.768627450980392,0.305882352941176,0.32156862745098}
	\definecolor{color4}{rgb}{0.505882352941176,0.447058823529412,0.701960784313725}
	
	\definecolor{darkgray176}{RGB}{176,176,176}
	\definecolor{green01270}{RGB}{0,127,0}
	\definecolor{lightgray204}{RGB}{204,204,204}
	
	\begin{axis}[
		height=5cm,
		legend cell align={left},
		legend style={
			fill opacity=0.9,
			draw opacity=1,
			text opacity=1,
			at={(1.01,0)},
			anchor=south east,
			draw=white!80!black,
			font=\scriptsize
		},
		legend columns=3,
		tick pos=left,
		unbounded coords=jump,
		xtick={-.2,.8,2,3.3},
		xtick style={color=white},
		xticklabels={DGD, Local GD, FJ-DGD-1, FJ-DGD-2},
		xticklabel style={font=\scriptsize},
		y grid style={white!69.0196078431373!black},
		ylabel={{Accuracy}},
		ylabel shift=-5pt,
		ylabel style={font=\footnotesize},
		ymajorgrids,
		ytick={.8,.9,1},
		ymax=1.002,
		ytick style={color=black},
		yticklabel style={font=\scriptsize},
		ytick align=inside,
		]
		\addplot [line width=1.08pt, color0, mark=*, mark size=3, mark options={solid}, only marks, forget plot]
		table {%
			0 0.9217777777777778
		};
		\addplot [line width=1.08pt, color0, forget plot]
		table {%
			-0.1 0.813
			0.1 0.813
			nan nan
			0 0.813
			0 1
			nan nan
			-0.1 1
			0.1 1
		};
		\addplot [line width=1.08pt, color0, mark=*, mark size=3, mark options={solid}, only marks, forget plot]
		table {%
			1 0.9433333333333334
		};
		\addplot [line width=1.08pt, color0, forget plot]
		table {%
			0.9 0.871
			1.1 0.871
			nan nan
			1 0.871
			1 1
			nan nan
			0.9 1
			1.1 1
		};
		\addplot [line width=1.08pt, color1, mark=square*, mark size=3, mark options={solid}, only marks]
		table {%
			2 0.9351111111111111
		};
		\addlegendentry{$\lambda=0.25$}
		\addplot [line width=1.08pt, color1, forget plot]
		table {%
			1.9 0.849
			2.1 0.849
			nan nan
			2 0.849
			2 1
			nan nan
			1.9 1
			2.1 1
		};
		\addplot [line width=1.08pt, color2, mark=diamond*, mark size=4, mark options={solid}, only marks]
		table {%
			2.2 0.9400000000000001
		};
		\addlegendentry{$\lambda=0.5$}
		\addplot [line width=1.08pt, color2, forget plot]
		table {%
			2.1 0.861
			2.3 0.861
			nan nan
			2.2 0.861
			2.2 1
			nan nan
			2.1 1
			2.3 1
		};
		\addplot [line width=1.08pt, color3, mark=pentagon*, mark size=3.5, mark options={solid}, only marks]
		table {%
			2.4 0.9417777777777777
		};
		\addlegendentry{$\lambda=0.75$}
		\addplot [line width=1.08pt, color3, forget plot]
		table {%
			2.3 0.867
			2.5 0.867
			nan nan
			2.4 0.867
			2.4 1
			nan nan
			2.3 1
			2.5 1
		};
		\addplot [line width=1.08pt, color1, mark=square*, mark size=3, mark options={solid}, only marks]
		table {%
			3 0.9286666666666668
		};
		\addplot [line width=1.08pt, color1, forget plot]
		table {%
			2.9 0.832
			3.1 0.832
			nan nan
			3 0.832
			3 1
			nan nan
			2.9 1
			3.1 1
		};
		\addplot [line width=1.08pt, color2, mark=diamond*, mark size=4, mark options={solid}, only marks]
		table {%
			3.2 0.9351111111111111
		};
		\addplot [line width=1.08pt, color2, forget plot]
		table {%
			3.1 0.85
			3.3 0.85
			nan nan
			3.2 0.85
			3.2 1
			nan nan
			3.1 1
			3.3 1
		};
		\addplot [line width=1.08pt, color3, mark=pentagon*, mark size=3.5, mark options={solid}, only marks]
		table {%
			3.4  0.9422222222222223
		};
		\addplot [line width=1.08pt, color3, forget plot]
		table {%
			3.3 0.869
			3.5 0.869
			nan nan
			3.4 0.869
			3.4 1
			nan nan
			3.3 1
			3.5 1
		};
	\end{axis}
	
\end{tikzpicture}}}
	\subfloat[Global training accuracy.
	\label{fig:fj-djd-glob-train}]{\resizebox{.25\textwidth}{!}{

\begin{tikzpicture}
	
	\definecolor{color0}{rgb}{0.298039215686275,0.447058823529412,0.690196078431373}
	\definecolor{color1}{rgb}{0.866666666666667,0.517647058823529,0.32156862745098}
	\definecolor{color2}{rgb}{0.333333333333333,0.658823529411765,0.407843137254902}
	\definecolor{color3}{rgb}{0.768627450980392,0.305882352941176,0.32156862745098}
	\definecolor{color4}{rgb}{0.505882352941176,0.447058823529412,0.701960784313725}
	
	\begin{axis}[
		height=5cm,
		tick pos=left,
		unbounded coords=jump,
		xtick={-.2,.8,2,3.3},
		xtick style={color=white},
		xticklabels={DGD, Local GD, FJ-DGD-1, FJ-DGD-2},
		xticklabel style={font=\scriptsize},
		y grid style={white!69.0196078431373!black},
		ymajorgrids,
		ytick={.4, .5, .6, .7,.8,.9},
		ytick style={color=black},
		yticklabel style={font=\scriptsize},
		ytick align=inside,
		]
		\addplot [line width=1.08pt, color0, mark=*, mark size=3, mark options={solid}, only marks, forget plot]
		table {%
			0 0.915
		};
		\addplot [line width=1.08pt, color0, forget plot]
		table {%
			-0.1 0.912
			0.1 0.912
			nan nan
			0 0.912
			0 0.9176
			nan nan
			-0.1 0.9176
			0.1 0.9176
		};
		\addplot [line width=1.08pt, color0, mark=*, mark size=3, mark options={solid}, only marks, forget plot]
		table {%
			1 0.592
		};
		\addplot [line width=1.08pt, color0, forget plot]
		table {%
			0.9 0.482
			1.1 0.482
			nan nan
			1 0.482
			1 0.701
			nan nan
			0.9 0.701
			1.1 0.701
		};
		\addplot [line width=1.08pt, color1, mark=square*, mark size=3, mark options={solid}, only marks]
		table {%
			2 0.678
		};
		\addplot [line width=1.08pt, color1, forget plot]
		table {%
			1.9 0.591
			2.1 0.591
			nan nan
			2 0.591
			2 0.765
			nan nan
			1.9 0.765
			2.1 0.765
		};
		\addplot [line width=1.08pt, color2, mark=diamond*, mark size=4, mark options={solid}, only marks]
		table {%
			2.2 0.632
		};
		\addplot [line width=1.08pt, color2, forget plot]
		table {%
			2.1 0.527
			2.3 0.527
			nan nan
			2.2 0.527
			2.2 0.737
			nan nan
			2.1 0.737
			2.3 0.737
		};
		\addplot [line width=1.08pt, color3, mark=pentagon*, mark size=3.5, mark options={solid}, only marks]
		table {%
			2.4 0.606
		};
		\addplot [line width=1.08pt, color3, forget plot]
		table {%
			2.3 0.5
			2.5 0.5
			nan nan
			2.4 0.5
			2.4 0.713
			nan nan
			2.3 0.713
			2.5 0.713
		};
		\addplot [line width=1.08pt, color1, mark=square*, mark size=3, mark options={solid}, only marks]
		table {%
			3 0.823
		};
		\addplot [line width=1.08pt, color1, forget plot]
		table {%
			2.9 0.770
			3.1 0.770
			nan nan
			3 0.770
			3 0.877
			nan nan
			2.9 0.877
			3.1 0.877
		};
		\addplot [line width=1.08pt, color2, mark=diamond*, mark size=4, mark options={solid}, only marks]
		table {%
			3.2 0.723
		};
		\addplot [line width=1.08pt, color2, forget plot]
		table {%
			3.1 0.643
			3.3 0.643
			nan nan
			3.2 0.643
			3.2 0.804
			nan nan
			3.1 0.804
			3.3 0.804
		};
		\addplot [line width=1.08pt, color3, mark=pentagon*, mark size=3.5, mark options={solid}, only marks]
		table {%
			3.4  0.646
		};
		\addplot [line width=1.08pt, color3, forget plot]
		table {%
			3.3 0.546
			3.5 0.546
			nan nan
			3.4 0.546
			3.4 0.745
			nan nan
			3.3 0.745
			3.5 0.745
		};
	\end{axis}
	
\end{tikzpicture}}}
	\subfloat[Local test accuracy.
	\label{fig:fj-djd-loc-test}]{\resizebox{.25\textwidth}{!}{

\begin{tikzpicture}
	
	\definecolor{color0}{rgb}{0.298039215686275,0.447058823529412,0.690196078431373}
	\definecolor{color1}{rgb}{0.866666666666667,0.517647058823529,0.32156862745098}
	\definecolor{color2}{rgb}{0.333333333333333,0.658823529411765,0.407843137254902}
	\definecolor{color3}{rgb}{0.768627450980392,0.305882352941176,0.32156862745098}
	\definecolor{color4}{rgb}{0.505882352941176,0.447058823529412,0.701960784313725}
	
	\begin{axis}[
		height=5cm,
		tick pos=left,
		unbounded coords=jump,
		xtick={-.2,.8,2,3.3},
		xtick style={color=white},
		xticklabels={DGD, Local GD, FJ-DGD-1, FJ-DGD-2},
		xticklabel style={font=\scriptsize},
		y grid style={white!69.0196078431373!black},
		ymajorgrids,
		ymax=1.002,
		ytick style={color=black},
		yticklabel style={font=\scriptsize},
		ytick align=inside,
		]
		\addplot [line width=1.08pt, color0, mark=*, mark size=3, mark options={solid}, only marks, forget plot]
		table {%
			0 0.908
		};
		\addplot [line width=1.08pt, color0, forget plot]
		table {%
			-0.1 0.788
			0.1 0.788
			nan nan
			0 0.788
			0 1
			nan nan
			-0.1 1
			0.1 1
		};
		\addplot [line width=1.08pt, color0, mark=*, mark size=3, mark options={solid}, only marks, forget plot]
		table {%
			1 0.926
		};
		\addplot [line width=1.08pt, color0, forget plot]
		table {%
			0.9 0.836
			1.1 0.836
			nan nan
			1 0.836
			1 1
			nan nan
			0.9 1
			1.1 1
		};
		\addplot [line width=1.08pt, color1, mark=square*, mark size=3, mark options={solid}, only marks]
		table {%
			2 0.920
		};
		\addplot [line width=1.08pt, color1, forget plot]
		table {%
			1.9 0.824
			2.1 0.824
			nan nan
			2 0.824
			2 1
			nan nan
			1.9 1
			2.1 1
		};
		\addplot [line width=1.08pt, color2, mark=diamond*, mark size=4, mark options={solid}, only marks]
		table {%
			2.2 0.924
		};
		\addplot [line width=1.08pt, color2, forget plot]
		table {%
			2.1 0.834
			2.3 0.834
			nan nan
			2.2 0.834
			2.2 1
			nan nan
			2.1 1
			2.3 1
		};
		\addplot [line width=1.08pt, color3, mark=pentagon*, mark size=3.5, mark options={solid}, only marks]
		table {%
			2.4 0.928
		};
		\addplot [line width=1.08pt, color3, forget plot]
		table {%
			2.3 0.843
			2.5 0.843
			nan nan
			2.4 0.843
			2.4 1
			nan nan
			2.3 1
			2.5 1
		};
		\addplot [line width=1.08pt, color1, mark=square*, mark size=3, mark options={solid}, only marks]
		table {%
			3 0.920
		};
		\addplot [line width=1.08pt, color1, forget plot]
		table {%
			2.9 0.822
			3.1 0.822
			nan nan
			3 0.822
			3 1
			nan nan
			2.9 1
			3.1 1
		};
		\addplot [line width=1.08pt, color2, mark=diamond*, mark size=4, mark options={solid}, only marks]
		table {%
			3.2 0.926
		};
		\addplot [line width=1.08pt, color2, forget plot]
		table {%
			3.1 0.838
			3.3 0.838
			nan nan
			3.2 0.838
			3.2 1
			nan nan
			3.1 1
			3.3 1
		};
		\addplot [line width=1.08pt, color3, mark=pentagon*, mark size=3.5, mark options={solid}, only marks]
		table {%
			3.4  0.93
		};
		\addplot [line width=1.08pt, color3, forget plot]
		table {%
			3.3 0.848
			3.5 0.848
			nan nan
			3.4 0.848
			3.4 1
			nan nan
			3.3 1
			3.5 1
		};
	\end{axis}
	
\end{tikzpicture}}}
	\subfloat[Global test accuracy.
	\label{fig:fj-djd-glob-test}]{\resizebox{.25\textwidth}{!}{

\begin{tikzpicture}
	
	\definecolor{color0}{rgb}{0.298039215686275,0.447058823529412,0.690196078431373}
	\definecolor{color1}{rgb}{0.866666666666667,0.517647058823529,0.32156862745098}
	\definecolor{color2}{rgb}{0.333333333333333,0.658823529411765,0.407843137254902}
	\definecolor{color3}{rgb}{0.768627450980392,0.305882352941176,0.32156862745098}
	\definecolor{color4}{rgb}{0.505882352941176,0.447058823529412,0.701960784313725}
	
	\begin{axis}[
		height=5cm,
		tick pos=left,
		unbounded coords=jump,
		xtick={-.2,.8,2,3.3},
		xtick style={color=white},
		xticklabels={DGD, Local GD, FJ-DGD-1, FJ-DGD-2},
		xticklabel style={font=\scriptsize},
		y grid style={white!69.0196078431373!black},
		ymajorgrids,
		ytick={.4, .5, .6, .7,.8,.9},
		ytick style={color=black},
		yticklabel style={font=\scriptsize},
		ytick align=inside,
		]
		\addplot [line width=1.08pt, color0, mark=*, mark size=3, mark options={solid}, only marks, forget plot]
		table {%
			0 0.907
		};
		\addplot [line width=1.08pt, color0, forget plot]
		table {%
			-0.1 0.904
			0.1 0.904
			nan nan
			0 0.904
			0 0.910
			nan nan
			-0.1 0.910
			0.1 0.910
		};
		\addplot [line width=1.08pt, color0, mark=*, mark size=3, mark options={solid}, only marks, forget plot]
		table {%
			1 0.588
		};
		\addplot [line width=1.08pt, color0, forget plot]
		table {%
			0.9 0.484
			1.1 0.484
			nan nan
			1 0.484
			1 0.691
			nan nan
			0.9 0.691
			1.1 0.691
		};
		\addplot [line width=1.08pt, color1, mark=square*, mark size=3, mark options={solid}, only marks]
		table {%
			2 0.684
		};
		\addplot [line width=1.08pt, color1, forget plot]
		table {%
			1.9 0.602
			2.1 0.602
			nan nan
			2 0.602
			2 0.766
			nan nan
			1.9 0.766
			2.1 0.766
		};
		\addplot [line width=1.08pt, color2, mark=diamond*, mark size=4, mark options={solid}, only marks]
		table {%
			2.2 0.630
		};
		\addplot [line width=1.08pt, color2, forget plot]
		table {%
			2.1 0.526
			2.3 0.526
			nan nan
			2.2 0.526
			2.2 0.733
			nan nan
			2.1 0.733
			2.3 0.733
		};
		\addplot [line width=1.08pt, color3, mark=pentagon*, mark size=3.5, mark options={solid}, only marks]
		table {%
			2.4 0.600
		};
		\addplot [line width=1.08pt, color3, forget plot]
		table {%
			2.3 0.496
			2.5 0.496
			nan nan
			2.4 0.496
			2.4 0.704
			nan nan
			2.3 0.704
			2.5 0.704
		};
		\addplot [line width=1.08pt, color1, mark=square*, mark size=3, mark options={solid}, only marks]
		table {%
			3 0.816
		};
		\addplot [line width=1.08pt, color1, forget plot]
		table {%
			2.9 0.758
			3.1 0.758
			nan nan
			3 0.758
			3 0.874
			nan nan
			2.9 0.874
			3.1 0.874
		};
		\addplot [line width=1.08pt, color2, mark=diamond*, mark size=4, mark options={solid}, only marks]
		table {%
			3.2 0.719
		};
		\addplot [line width=1.08pt, color2, forget plot]
		table {%
			3.1 0.636
			3.3 0.636
			nan nan
			3.2 0.636
			3.2 0.801
			nan nan
			3.1 0.801
			3.3 0.801
		};
		\addplot [line width=1.08pt, color3, mark=pentagon*, mark size=3.5, mark options={solid}, only marks]
		table {%
			3.4  0.641
		};
		\addplot [line width=1.08pt, color3, forget plot]
		table {%
			3.3 0.546
			3.5 0.546
			nan nan
			3.4 0.546
			3.4 0.737
			nan nan
			3.3 0.737
			3.5 0.737
		};
	\end{axis}
	
\end{tikzpicture}}}
	\caption{Accuracy with DGD and its FJ-based variants for the task in \cref{ex:bin-class}.
		Marks show the mean and bars one standard deviation intervals across all agents.}
	\label{fig:fj-dgd}
\end{figure*}
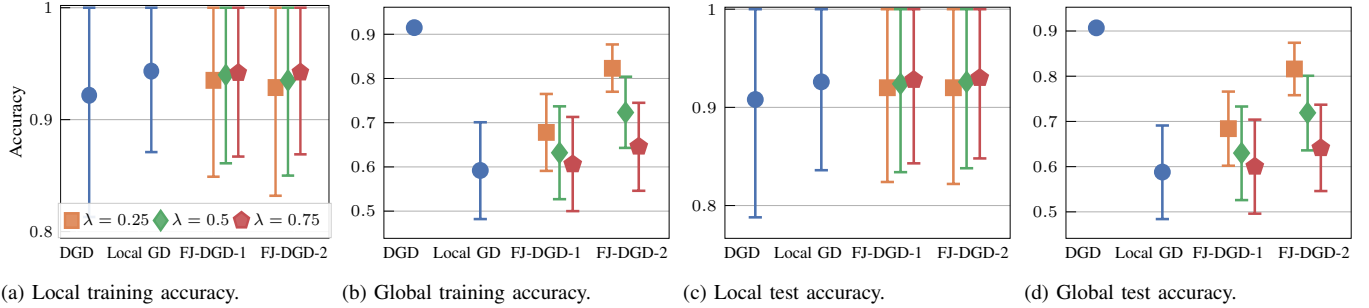

\begin{figure}
	\centering

\begin{tikzpicture}
	
	\definecolor{color0}{rgb}{0.298039215686275,0.447058823529412,0.690196078431373}
	\definecolor{color1}{rgb}{0.866666666666667,0.517647058823529,0.32156862745098}
	\definecolor{color2}{rgb}{0.333333333333333,0.658823529411765,0.407843137254902}
	\definecolor{color3}{rgb}{0.768627450980392,0.305882352941176,0.32156862745098}
	\definecolor{color4}{rgb}{0.505882352941176,0.447058823529412,0.701960784313725}
	\definecolor{lightgray204}{RGB}{204,204,204}
	
	\begin{axis}[
		height=4.5cm,
		width=\columnwidth,
		legend cell align={left},
		legend style={fill opacity=0.9,
			draw opacity=1,
			text opacity=1,
			draw=lightgray204,
			draw=white!80!black,
			font=\small},
		legend pos=south east,
		tick pos=left,
		x grid style={white!69.0196078431373!black},
		xlabel={$\lambda$},
		xlabel style={font=\small},
		xmajorgrids,
		xtick={0.25,0.5,0.75},
		xtick style={color=black},
		y grid style={white!69.0196078431373!black},
		ylabel={Accuracy},
		ylabel style={font=\small},
		ymin=.649, ymax=.801,
		ymajorgrids,
		ytick style={color=black},
		yticklabel style={font=\small},
		]
		\addplot+[very thick, black, no marks]
		table {%
			0.25 0.68
			0.5 0.68
			0.75 0.68
		};
		\addlegendentry{DGD}
		\addplot+[very thick, black, dashed, no marks]
		table {%
			0.25 0.74
			0.5 0.74
			0.75 0.74
		};
		\addlegendentry{Local GD}
		\addplot+[very thick, color0, mark=*, mark size=3, mark options={solid,fill opacity=.6}, forget plot]
		table {%
			0.25 0.76
			0.5 0.78
			0.75 0.78
		};
		\addlegendimage{very thick, color0, mark=*, mark options={solid}, mark size=3}
		\addlegendentry{FJ-DGD-1}
		\addplot+[very thick, color1, mark=square*, mark size=3, mark options={solid,fill opacity=.6}, forget plot]
		table {%
			0.25 0.74
			0.5 0.78
			0.75 0.78
		};
		\addlegendimage{very thick, color1, mark=square*, mark options={solid}, mark size=3}
		\addlegendentry{FJ-DGD-2}
	\end{axis}
	
\end{tikzpicture}
	\caption{Minimal local test accuracy achieved for the task in \cref{ex:bin-class}.}
	\label{fig:fj-dgd-loc-test-min}
\end{figure}
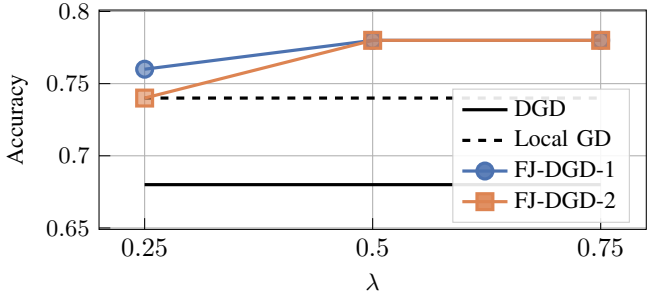

We use the following distributed learning task to conveniently model heterogeneous agents and assess personalization ability.

\begin{ex}[Binary classification]\label{ex:bin-class}
	Consider a classification problem with local loss function
	\begin{equation}\label{eq:log-reg}
		f_i(x_i) = \frac{1}{|\idxset{i}|} \sum_{j\in\idxset{i}} \log\Big( 1 + \exp\left( - \lab{i}{j} \feat{i}{j}^\top x_i \right)\!\Big) + \gamma \norm{x_i}
	\end{equation}
	where $\datapoint{i}{j} = \left( \feat{i}{j}, \lab{i}{j} \right)$, with $\feat{i}{j} \in \Real{n}$ the feature vector and $\lab{i}{j} \in \{ -1, 1 \}$ the classification label.
	Problem~\eqref{eq:distr-learning} with the costs~\eqref{eq:log-reg} is convex.
	We set the size of model parameters $n = 15$.
	Each agent $i$ has access to $|\data[i]| = 450$ training samples and $|\data[i]^{\textnormal{test}}|=50$ test samples such that $\data[i]^{\textnormal{test}}\cap\data[i]=\emptyset$.
	\blue{Following~\cite{Li20mls-federatedOptimization},
	we simulate heterogeneous agents by generating data points $(\feat{i}{j},\lab{i}{j})$ according to $\lab{i}{j}=\argmax(\softmax(W_i\feat{i}{j}+c_i))$ where each element of $W_i$ and $c_i$ is drawn from $\calN(\mu_i,1)$ with $\mu_i\sim\calN(0,1)$,
	and $\feat{i}{j}\sim\calN(\nu_i,\Sigma_i)$ where each element of $\nu_i$ is drawn from $\calN(B_i,1)$ with $B_i\sim\calN(0,1)$ and $\Sigma_i$ is diagonal with $[\Sigma_i]_{jj} = j^{-1.2}$.}
	Agent $i$ classifies the feature vector $\feat{i}{j}$ as
	\begin{equation}
		\hat{\lab{}{}}_{x_i}(\feat{i}{j}) = \begin{cases}
			1 & \text{if $\left(1 + \exp\left( - \feat{i}{j}^\top x_i\right) \right)^{-1} > 0.5$} \\
			- 1 & \text{otherwise}.
		\end{cases}
	\end{equation}
	We set $N = 10$ agents connected through the ring topology.
    \qed
\end{ex}

We train with the setup in \cref{ex:bin-class} and $\gamma = 0.01$.
We choose $\Lambda=\lambda I$ to more easily interpret the experiment through the scalar parameter $\lambda\in[0,1]$.
The results are reported in \cref{fig:fj-dgd,fig:fj-dgd-loc-test-min},
where the label ``Local GD'' refers to each agent independently training on its own dataset without any collaboration.
\blue{Note that Local GD is the instantiation of~\eqref{eq:fj-dgd} and~\eqref{eq:fj-dgd-2} with $\lambda=1$,
whereas~\eqref{eq:dgd} corresponds to setting $\lambda=0$.}

The test accuracy is very similar to the one obtained from training data,
suggesting that the learned models generalize well.
Compared to~\eqref{eq:dgd},
both FJ-inspired modifications exhibit superior personalization performance.
In particular,
\cref{fig:fj-djd-loc-train,fig:fj-djd-loc-test} reveal a graceful improvement in personalization,
along with a controlled degradation of global accuracy in \cref{fig:fj-djd-glob-train,fig:fj-djd-glob-test},
by suitably tuning the parameter $\lambda$.
\blue{As this increases,
the agents turn more ``stubborn'' and consistently bias the local models towards their respective local optimizers,
improving local accuracy.}
Notably,
both~\eqref{eq:fj-dgd} and~\eqref{eq:fj-dgd-2} with $\lambda=0.75$ slightly outperform ``Local GD'' on the local test accuracy (mean accuracy $0.93$ vs. $0.926$),
suggesting that collaboration with other agents is helpful even for personalization purposes,
possibly because agents with similar distributions benefit from each other.
On the other hand,
the global accuracy is much higher for~\eqref{eq:fj-dgd}--\eqref{eq:fj-dgd-2} than ``Local GD'' which highlights the need for (partial) collaboration for generalizing to other agents' distributions.

\autoref{fig:fj-dgd-loc-test-min} spotlights the lowest local accuracy across all agents with the four compared strategies on the test sets.
Note that~\eqref{eq:dgd} and ``Local GD'' feature straight lines because they do not depend on $\lambda$.
DGD performs the worst,
as expected since it values the contribution of all agents equal.
Less intuitive is the fairly poor performance of ``Local GD'' that is even worse than the least personalized FJ-based algorithm.
This also hints at partial collaboration as useful to achieve consistent effective personalization for all agents.

Overall, the best performance is provided by~\eqref{eq:fj-dgd-2},
which scores just slightly higher than~\eqref{eq:fj-dgd} in terms of local accuracy --- and hence provides similar personalization --- but feature significantly higher ($6\%$ to $16\%$ higher mean) global accuracy.
This suggests that a two-stage cascade comprising computation of local and global models and subsequent convex combination of the two is an effective strategy to personalize the local models without excessively compromising global performance.
On the other hand,
\eqref{eq:fj-dgd} personalizes the models as well and requires only two thirds of the memory used by~\eqref{eq:fj-dgd-2},
which is especially useful to train large models as compared to the storage capacity of the agents.

We have demonstrated the superiority of FJ-inspired algorithms for personalization.
We now explore how the algorithms under study behave as the inter-agent heterogeneity varies.

\subsubsection*{Increasing heterogeneity}\label{sec:personalization-heterogeneity}

We propose a study to isolate the effect heterogeneity plays in trading personalization for global accuracy and to assess how well the different algorithms personalize the local models.
To this aim,
we consider the following simplified version of \cref{ex:bin-class} that allows us to easily tune the heterogeneity among agents' local distributions and to visually compare the learned models.

\begin{ex}[Binary classification with 2D features]\label{ex:bin-class-2d}
	The distribution $\calP_i$ of agent $i$ produces samples $\datapoint{i}{j} = \left( \feat{i}{j}, \lab{i}{j} \right)$ where $\feat{i}{j} \in \Real{2}$ and the corresponding label is generated according to the linear model
	\begin{equation}
		\lab{i}{j} = \begin{cases}
			-1	& \mbox{if } w_i^\top \feat{i}{j} + v_{i,j} \ge 0\\
			1	&\mbox{otherwise},
		\end{cases}
	\end{equation}
	with noise $v_{i,j} \sim \calN(0,0.01)$.
	Given a parameter $\theta>0$,
	we construct the ground-truth vectors $\{w_i\}_{i\in\calV}$ as $w_i = [1 \; \theta_i]^\top$
	where the parameters $\{\theta_i\}_{i\in\calV}$ are evenly spaced between $-\theta$ and $\theta$ (included).
	In words,
	a larger value of $\theta$ amplifies the differences among the slopes of vectors $w_i$ and makes the local agents' distributions more heterogeneous.
	Two cases are depicted in \autoref{fig:het-model} with $\theta\in\{0.1,1\}$.
	Agent $i$ trains its parameter $x_i$ to learn $w_i$ using the same loss of \cref{ex:bin-class}.
	The $N=10$ agents communicate according to a circulant graph where each agent has four neighbors.
    We set $\gamma = 10^{-5}$ and train for $1000$ iterations with $500$ training samples per agent.
    \qed
\end{ex}

\begin{figure}
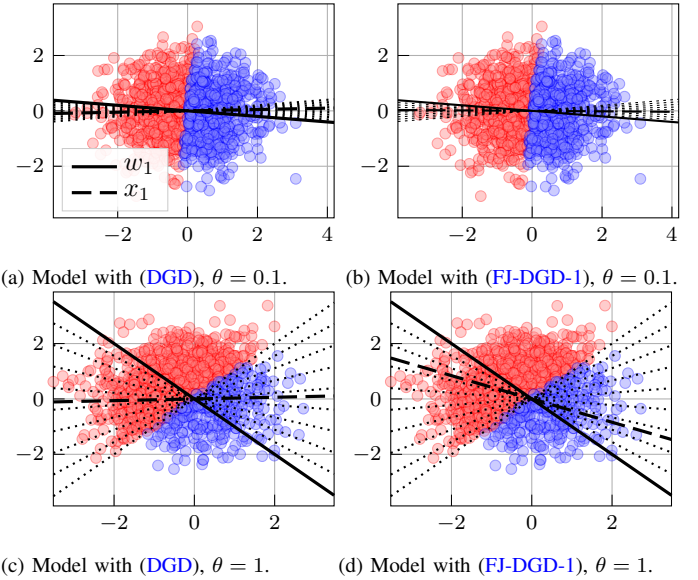

	\centering
	\subfloat[Model with~\eqref{eq:dgd}, $\theta=0.1$.
	\label{fig:het-model-dgd-01}]{\input{figures/7-heterogeneity_model_DGD_0.1}}
	\subfloat[Model with~\eqref{eq:fj-dgd}, $\theta=0.1$.
	\label{fig:het-model-fj-dgd-01}]{\input{figures/7-heterogeneity_model_FJ-DGD-1_0.1}}\\
	\subfloat[Model with~\eqref{eq:dgd}, $\theta=1$.
	\label{fig:het-model-dgd-1}]{\input{figures/7-heterogeneity_model_DGD_1}}
	\subfloat[Model with~\eqref{eq:fj-dgd}, $\theta=1$.
	\label{fig:het-model-fj-dgd-1}]{\input{figures/7-heterogeneity_model_FJ-DGD-1_1}}
	\caption{Samples of and model learned by agent $1$ along with true classifiers $w_i$ of \cref{ex:bin-class-2d}.
		Vector $w_1$ is solid,
		the other vectors $w_i, i\neq1$ dotted.
	}
	\label{fig:het-model}
\end{figure}

\begin{figure}
	\centering
	\subfloat[Local accuracy.
	\label{fig:fj-dgd-het-accuracy-local}]{
\begin{tikzpicture}
	
	\definecolor{color0}{rgb}{0.298039215686275,0.447058823529412,0.690196078431373}
	\definecolor{color1}{rgb}{0.866666666666667,0.517647058823529,0.32156862745098}
	\definecolor{color2}{rgb}{0.333333333333333,0.658823529411765,0.407843137254902}
	\definecolor{color3}{rgb}{0.768627450980392,0.305882352941176,0.32156862745098}
	\definecolor{color4}{rgb}{0.505882352941176,0.447058823529412,0.701960784313725}
		
	\definecolor{darkgray176}{RGB}{176,176,176}
	\definecolor{green01270}{RGB}{0,127,0}
	\definecolor{lightgray204}{RGB}{204,204,204}
	
	\begin{axis}[
		width=\linewidth,
		height=.5\linewidth,
		legend cell align={left},
		legend style={
			fill opacity=0.8,
			draw opacity=1,
			text opacity=1,
			at={(0.03,0.03)},
			anchor=south west,
			draw=white!80!black,
		},
		legend columns=2,
		tick align=inside,
		tick pos=left,
		xtick={.1,.5,1},
		x grid style={darkgray176},
		xmajorgrids,
		xmin=0.055, xmax=1.045,
		xtick style={color=color2},
		xmajorticks=false,
		y grid style={darkgray176},
		ymajorgrids,
            ylabel={Accuracy},
		ymin=0.82, ymax=0.986906666666667,
		ytick style={color=color2},
		]
		\addplot [very thick, color0]
		table {%
			0.1 0.980511111111111
			0.5 0.914377777777778
			1 0.8526
		};
		\addlegendentry{DGD}
		\addplot [color0, mark=*, mark size=1.72504750337168, mark options={solid}, forget plot]
		table {%
			0.1 0.980511111111111
		};
		\addplot [color0, mark=*, mark size=7.4448199739446, mark options={solid}, forget plot]
		table {%
			0.5 0.914377777777778
		};
		\addplot [color0, mark=*, mark size=11.305279494309, mark options={solid}, forget plot]
		table {%
			1 0.8526
		};
		\addplot [very thick, color1, densely dashed]
		table {%
			0.1 0.968044444444444
			0.5 0.9642
			1 0.953866666666667
		};
		\addlegendentry{Local GD}
		\addplot [semithick, color1, mark=*, mark size=6.2679041331391, mark options={solid}, forget plot]
		table {%
			0.1 0.968044444444444
		};
		\addplot [semithick, color1, mark=*, mark size=6.38035962065532, mark options={solid}, forget plot]
		table {%
			0.5 0.9642
		};
		\addplot [semithick, color1, mark=*, mark size=7.17135040746628, mark options={solid}, forget plot]
		table {%
			1 0.953866666666667
		};
		\addplot [very thick, color2, dashdotted]
		table {%
			0.1 0.969466666666667
			0.5 0.947155555555556
			1 0.925222222222222
		};
		\addlegendentry{FJ-DGD-1}
		\addplot [color2, mark=*, mark size=3.83452735027409, mark options={solid, fill opacity=.8}, forget plot]
		table {%
			0.1 0.969466666666667
		};
		\addplot [semithick, color2, mark=*, mark size=4.45323353181583, mark options={solid}, forget plot]
		table {%
			0.5 0.947155555555556
		};
		\addplot [semithick, color2, mark=*, mark size=6.16266987595474, mark options={solid}, forget plot]
		table {%
			1 0.925222222222222
		};
		\addplot [very thick, color3, dotted]
		table {%
			0.1 0.974822222222222
			0.5 0.947111111111111
			1 0.925244444444444
		};
		\addlegendentry{FJ-DGD-2}
		\addplot [thick, color3, mark=*, mark size=3.89686626235321, mark options={solid, fill opacity=.8}, forget plot]
		table {%
			0.1 0.974822222222222
		};
		\addplot [semithick, color3, mark=*, mark size=4.89768652869224, mark options={solid, fill opacity=.7}, forget plot]
		table {%
			0.5 0.947111111111111
		};
		\addplot [semithick, color3, mark=*, mark size=6.34681372378648, mark options={solid, fill opacity=.7}, forget plot]
		table {%
			1 0.925244444444444
		};
	\end{axis}
	
\end{tikzpicture}}\\
	\subfloat[Global accuracy.
	\label{fig:fj-dgd-het-accuracy-global}]{
\begin{tikzpicture}
	\definecolor{color0}{rgb}{0.298039215686275,0.447058823529412,0.690196078431373}
	\definecolor{color1}{rgb}{0.866666666666667,0.517647058823529,0.32156862745098}
	\definecolor{color2}{rgb}{0.333333333333333,0.658823529411765,0.407843137254902}
	\definecolor{color3}{rgb}{0.768627450980392,0.305882352941176,0.32156862745098}
	\definecolor{color4}{rgb}{0.505882352941176,0.447058823529412,0.701960784313725}
	
	\definecolor{darkgray176}{RGB}{176,176,176}
	\definecolor{lightgray204}{RGB}{204,204,204}
	
	\begin{axis}[
		width=\linewidth,
		height=.5\linewidth,
		tick align=inside,
		tick pos=left,
		x grid style={darkgray176},
		xmajorgrids,
		xmin=0.055, xmax=1.045,
		xtick={.1,.5,1},
		xlabel={$\theta$},
		y grid style={darkgray176},
		ymajorgrids,
		ylabel={Accuracy},
		ymin=0.805207333333333, ymax=0.988672666666667,
		ytick style={color=color0}
		]
		\addplot [very thick, color0]
		table {%
			0.1 0.980333333333334
			0.5 0.9137
			1 0.85084
		};
		\addplot [semithick, color0, mark=*, mark size=0.00447213595500817, mark options={solid}, forget plot]
		table {%
			0.1 0.980333333333334
		};
		\addplot [semithick, color0, mark=*, mark size=0.0109797793946653, mark options={solid}, forget plot]
		table {%
			0.5 0.9137
		};
		\addplot [semithick, color0, mark=*, mark size=0.0360493488922586, mark options={solid}, forget plot]
		table {%
			1 0.85084
		};
		\addplot [very thick, color1, densely dashed]
		table {%
			0.1 0.95724
			0.5 0.889133333333333
			1 0.813546666666667
		};
		\addplot [semithick, color1, mark=*, mark size=5.08639488133678, mark options={solid}, forget plot]
		table {%
			0.1 0.95724
		};
		\addplot [semithick, color1, mark=*, mark size=2.67123982866051, mark options={solid}, forget plot]
		table {%
			0.5 0.889133333333333
		};
		\addplot [semithick, color1, mark=*, mark size=4.28364336102394, mark options={solid}, forget plot]
		table {%
			1 0.813546666666667
		};
		\addplot [very thick, color2, dashdotted]
		table {%
			0.1 0.967951111111111
			0.5 0.905373333333333
			1 0.834997777777778
		};
		\addplot [semithick, color2, mark=*, mark size=3.44315546491232, mark options={solid}, forget plot]
		table {%
			0.1 0.967951111111111
		};
		\addplot [semithick, color2, mark=*, mark size=0.83799071859087, mark options={solid}, forget plot]
		table {%
			0.5 0.905373333333333
		};
		\addplot [semithick, color2, mark=*, mark size=1.69001213013398, mark options={solid}, forget plot]
		table {%
			1 0.834997777777778
		};
		\addplot [very thick, color3, dotted]
		table {%
			0.1 0.96904
			0.5 0.905035555555556
			1 0.834704444444444
		};
		\addplot [semithick, color3, mark=*, mark size=3.21591106842213, mark options={solid}, forget plot]
		table {%
			0.1 0.96904
		};
		\addplot [semithick, color3, mark=*, mark size=1.01462987230703, mark options={solid}, forget plot]
		table {%
			0.5 0.905035555555556
		};
		\addplot [semithick, color3, mark=*, mark size=1.94359177812626, mark options={solid}, forget plot]
		table {%
			1 0.834704444444444
		};
	\end{axis}
	
\end{tikzpicture}}
	\caption{Test accuracy with DGD and its FJ-based variations under increasing inter-agent heterogeneity for the binary classification task in \cref{ex:bin-class-2d}.}
	\label{fig:fj-dgd-het-accuracy}
\end{figure}
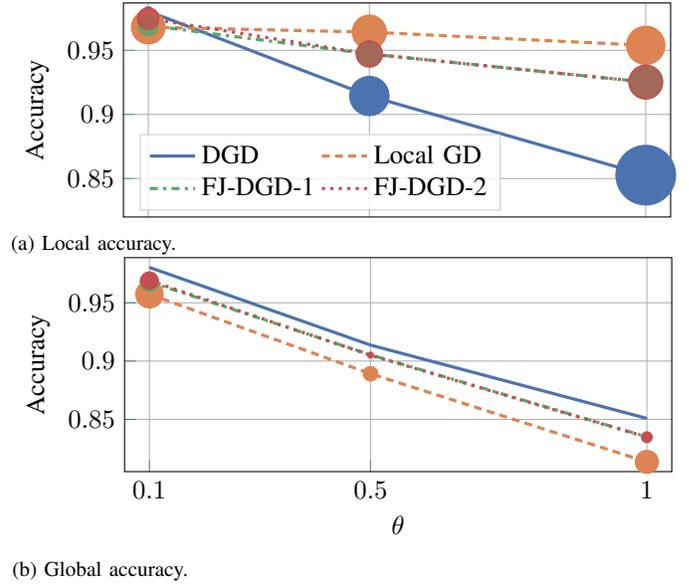

We set $\lambda=0.5$ for both~\eqref{eq:fj-dgd} and~\eqref{eq:fj-dgd-2}.
\autoref{fig:fj-dgd-het-accuracy} summarizes the performance achieved with the four algorithms previously compared.
Each mark displays the average accuracy as the location on the $y$-axis and the standard deviation across agents as the size.\footnote{
    We scale all standard deviations by $100$ to make the marks visible.
}
As $\theta$ increases,
the local distributions become more and more different and~\eqref{eq:dgd} struggles to balance global for local accuracy.
The latter degrades both in average and deviation,
with some agents hitting low scores.
If each agent independently trains (``Local GD''),
the local accuracy barely changes with $\theta$ but the global accuracy decreases.
On the other hand,
the FJ-based variants of DGD gracefully mediate between global accuracy and personalization,
achieving significantly higher local accuracy than~\eqref{eq:dgd} with smaller standard deviation.
Interestingly,
both~\eqref{eq:fj-dgd} and~\eqref{eq:fj-dgd-2} perform almost identical for this task,
especially with high heterogeneity (large $\theta$).

A visual explanation of enhanced personalization is given in \autoref{fig:het-model} focusing on agent $1$.
When $\theta$ is small,
all ground-truth local classifiers are similar to each other,
and both~\eqref{eq:dgd} and~\eqref{eq:fj-dgd} yield similar models $x_1$.
However,
when $\theta$ is large and the local models differ significantly,
the model $x_1$ learned by~\eqref{eq:fj-dgd} is closer to the true vector $w_1$ compared to the one learned by~\eqref{eq:dgd},
enhancing personalization while retaining high global accuracy.
This behavior can be easily tuned through the parameter $\lambda$ in~\eqref{eq:fj-dgd} or~\eqref{eq:fj-dgd-2} and makes these algorithms flexible to various needs.

The experiments of \cref{ex:bin-class,ex:bin-class-2d} suggest that the FJ-inspired adaptations of DGD are effective for learning personalized models while also retaining good global accuracy according to the nominal problem~\eqref{eq:distr-learning}.
Their simplicity makes them attractive both for practical implementation and for interpretation of the models,
whose degree of personalization can be easily tuned through the parameter $\lambda$.
In the next section,
we consider the scenario with malicious agents.

\subsection{Evaluating Resilience}\label{sec:resilience-experiments}

For this set of experiments,
we consider the multi-class classification task described next.

\begin{ex}[Image classification]\label{ex:task-mnist}
	We study classification on the MNIST dataset with samples $d = (\feat{}{},\lab{}{})$ where $\feat{}{}$ is the flattened (one-dimensional) photo of a handwritten digit and $\lab{}{}$ is the corresponding numerical value from $0$ to $9$.
	Each agent $i$ learns the parameter $x_i\in\Real{p\times c}$ of a multi-class logistic classifier,
	where $p = 784$ is the image size ($28\times28$ pixels) and $c = 10$ is the number of classes (digits).
	Denoting the $\ell$th column of $x_i$ by $[x_i]_\ell$,
	the multi-class logistic loss is
	\begin{multline}\label{eq:multiclass-logistic-loss}
		\hspace{-3.5mm}
		f_i(x_i) = \frac{1}{|\idxset{i}|} \sum_{j\in\idxset{i}} \lr \log\lr\sum_{c'=1}^c \exp\lr\feat{i}{j}^\top [x_i]_{c'}\rr\rr - \feat{i}{j}^\top [x_i]_{\lab{i}{j}} \rr\\
		+ \gamma \norm{x_i}
	\end{multline}
	with regularization weight $\gamma = 0.1$.
	Labels are then assigned according to the softmax policy as
	\begin{equation}
		\hat{\lab{}{}} = \argmax_{\ell\in\{1,\dots,c\}} \dfrac{\exp(\feat{}{}^\top [x_i]_{\ell})}{\sum_{c'=1}^c \exp(\feat{}{}^\top [x_i]_{c'})} - 1.
	\end{equation}
	We simulate $N = 100$ total agents communicating over a random geometric network on the unitary square in $\Real{2}$ with communication radius $\rho = 0.25$.
	We generate three datasets that make collaborative learning increasingly difficult.
	\begin{description}
		\item[Homogeneous (Hom):] We randomly assign $554$ samples of MNIST to each agent.
		Samples are randomly split between local training set ($443$ samples) and test set ($111$ samples) for each agent.
		In this case,
		all agents most likely have representatives of all classes in both training and test data.
		\item[\boldmath Heterogeneous-$2$ (Het-$2$):] We modify the ``Hom'' datasets and randomly remove two classes from each agent's local train and test data.
		Thus,
		each agent's local datasets contain at most eight out of the ten digit classes.
		\item[\boldmath Heterogeneous-$5$ (Het-$5$):] We randomly remove five classes from each agent's local ``Hom'' train and test data.
	\end{description}
	In all experiments,
	we show only the global accuracy since in this case we are not interested in personalized models.
    \qed
\end{ex}

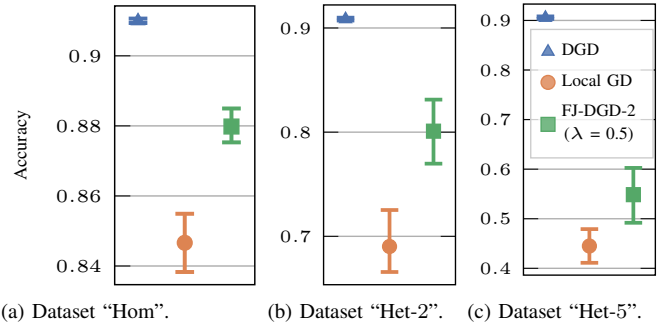
\begin{figure}
	\centering
	\subfloat[Dataset ``Hom''.
	\label{fig:mnist-no-mal-hom}]{\resizebox{.4\linewidth}{!}{

\begin{tikzpicture}
	
	\definecolor{darkgray176}{RGB}{176,176,176}
	\definecolor{lightgray204}{RGB}{204,204,204}
	\definecolor{steelblue76114176}{RGB}{76,114,176}
	\definecolor{color1}{rgb}{0.866666666666667,0.517647058823529,0.32156862745098}
	\definecolor{color2}{rgb}{0.333333333333333,0.658823529411765,0.407843137254902}
	
	\begin{axis}[
		width=.33\linewidth,
		height=.48\linewidth,
		legend style={fill opacity=0.8, draw opacity=1, text opacity=1, draw=lightgray204},
		tick align=inside,
		tick pos=left,
		x grid style={darkgray176},
		xmin=-0.5, xmax=2.5,
		xtick style={color=black},
		xtick={0,1,2},
		xticklabels={DGD,Local GD,FJ-DGD-1},
		xmajorticks=false,
		y grid style={darkgray176},
		yticklabel style={font=\tiny},
		ylabel={Accuracy},
		ylabel style={font=\tiny},
		ymajorgrids,
		ymin=0.83467119038105, ymax=0.914247760176659,
		ytick style={color=black}
		]
		\addplot [draw=steelblue76114176, fill=steelblue76114176, forget plot, mark=triangle*, only marks]
		table{%
			0 0.909965753555298
		};
		\addplot [line width=1.08pt, steelblue76114176, forget plot]
		table {%
			0 0.909369349479675
			0 0.910630643367767
		};
		\addplot [line width=1.08pt, steelblue76114176, forget plot]
		table {%
			-0.2 0.909369349479675
			0.2 0.909369349479675
		};
		\addplot [line width=1.08pt, steelblue76114176, forget plot]
		table {%
			-0.2 0.910630643367767
			0.2 0.910630643367767
		};
		\addplot [draw=color1, fill=color1, forget plot, mark=*, only marks]
		table{%
			1 0.846627950668335
		};
		\addplot [line width=1.08pt, color1, forget plot]
		table {%
			1 0.838288307189941
			1 0.854909904301167
		};
		\addplot [line width=1.08pt, color1, forget plot]
		table {%
			0.8 0.838288307189941
			1.2 0.838288307189941
		};
		\addplot [line width=1.08pt, color1, forget plot]
		table {%
			0.8 0.854909904301167
			1.2 0.854909904301167
		};
		\addplot [draw=color2, fill=color2, forget plot, mark=square*, only marks]
		table{%
			2 0.879817068576813
		};
		\addplot [line width=1.08pt, color2, forget plot]
		table {%
			2 0.87529281526804
			2 0.88496620208025
		};
		\addplot [line width=1.08pt, color2, forget plot]
		table {%
			1.8 0.87529281526804
			2.2 0.87529281526804
		};
		\addplot [line width=1.08pt, color2, forget plot]
		table {%
			1.8 0.88496620208025
			2.2 0.88496620208025
		};
	\end{axis}
	
\end{tikzpicture}}}%
	\subfloat[Dataset ``Het-$2$''.
	\label{fig:mnist-no-mal-het2}]{\resizebox{.3\linewidth}{!}{

\begin{tikzpicture}
	
	\definecolor{darkgray176}{RGB}{176,176,176}
	\definecolor{lightgray204}{RGB}{204,204,204}
	\definecolor{steelblue76114176}{RGB}{76,114,176}
	\definecolor{color1}{rgb}{0.866666666666667,0.517647058823529,0.32156862745098}
	\definecolor{color2}{rgb}{0.333333333333333,0.658823529411765,0.407843137254902}
	
	\begin{axis}[
		width=.33\linewidth,
		height=.5\linewidth,
		unbounded coords=jump,
		x grid style={darkgray176},
		xmin=-0.5, xmax=2.5,
		xmajorticks=false,
		y grid style={darkgray176},
		yticklabel style={font=\tiny},
		ymajorgrids,
		ymin=0.653558563068509, ymax=0.922117110714316,
		ytick style={color=black}
		]
		\addplot [draw=steelblue76114176, fill=steelblue76114176, forget plot, mark=triangle*, only marks]
		table{%
			0 0.908458530902863
		};
		\addplot [line width=1.08pt, steelblue76114176, forget plot]
		table {%
			0 0.907117128372192
			0 0.909909904003143
		};
		\addplot [line width=1.08pt, steelblue76114176, forget plot]
		table {%
			-0.2 0.907117128372192
			0.2 0.907117128372192
		};
		\addplot [line width=1.08pt, steelblue76114176, forget plot]
		table {%
			-0.2 0.909909904003143
			0.2 0.909909904003143
		};
		\addplot [draw=color1, fill=color1, forget plot, mark=*, only marks]
		table{%
			1 0.690239667892456
		};
		\addplot [line width=1.08pt, color1, forget plot]
		table {%
			1 0.665765769779682
			1 0.725258998572826
		};
		\addplot [line width=1.08pt, color1, forget plot]
		table {%
			0.8 0.665765769779682
			1.2 0.665765769779682
		};
		\addplot [line width=1.08pt, color1, forget plot]
		table {%
			0.8 0.725258998572826
			1.2 0.725258998572826
		};
		\addplot [line width=1.08pt, color2, forget plot]
		table {%
			2 0.800943195819855
		};
		\addplot [line width=1.08pt, color2, forget plot]
		table {%
			2 0.769752264022827
			2 0.831238746643066
		};
		\addplot [line width=1.08pt, color2, forget plot]
		table {%
			1.8 0.769752264022827
			2.2 0.769752264022827
		};
		\addplot [line width=1.08pt, color2, forget plot]
		table {%
			1.8 0.831238746643066
			2.2 0.831238746643066
		};
		\addplot [draw=color2, fill=color2, mark=square*, only marks]
		table{%
			2 0.800943195819855
		};
	\end{axis}
	
\end{tikzpicture}}}%
	\subfloat[Dataset ``Het-$5$''.
	\label{fig:mnist-no-mal-het5}]{\resizebox{.3\linewidth}{!}{

\begin{tikzpicture}
	
	\definecolor{darkgray176}{RGB}{176,176,176}
	\definecolor{lightgray204}{RGB}{204,204,204}
	\definecolor{steelblue76114176}{RGB}{76,114,176}
	\definecolor{color1}{rgb}{0.866666666666667,0.517647058823529,0.32156862745098}
	\definecolor{color2}{rgb}{0.333333333333333,0.658823529411765,0.407843137254902}
	
	\begin{axis}[
		width=.33\linewidth,
		height=.49\linewidth,
		legend cell align={left},
		legend style={
			fill opacity=0.8,
			draw opacity=1,
			text opacity=1,
			at={(0.05,.43)},
			anchor=south west,
			draw=white!80!black,
			font=\tiny,
		},
		unbounded coords=jump,
		x grid style={darkgray176},
		xmin=-0.5, xmax=2.5,
		xmajorticks=false,
		y grid style={darkgray176},
		yticklabel style={font=\tiny},
		ymajorgrids,
            ytick={0.4,0.5,0.6,0.7,0.8,0.9},
		ymin=0.386274207755923, ymax=0.933016318455339,
		ytick style={color=black}
		]
		\addplot [draw=steelblue76114176, fill=steelblue76114176, mark=triangle*, only marks]
		table{%
			x  y
			0 0.905354917049408
		};
		\addlegendentry{DGD}
		\addplot [line width=1.08pt, steelblue76114176, forget plot]
		table {%
			0 0.902916669845581
			0 0.908164404332638
		};
		\addplot [line width=1.08pt, steelblue76114176, forget plot]
		table {%
			-0.2 0.902916669845581
			0.2 0.902916669845581
		};
		\addplot [line width=1.08pt, steelblue76114176, forget plot]
		table {%
			-0.2 0.908164404332638
			0.2 0.908164404332638
		};
		\addplot [draw=color1, fill=color1, mark=*, only marks]
		table{%
			1 0.445003598928452
		};
		\addlegendentry{Local GD}
		\addplot [line width=1.08pt, color1, forget plot]
		table {%
			1 0.411126121878624
			1 0.479099094867706
		};
		\addplot [line width=1.08pt, color1, forget plot]
		table {%
			0.8 0.411126121878624
			1.2 0.411126121878624
		};
		\addplot [line width=1.08pt, color1, forget plot]
		table {%
			0.8 0.479099094867706
			1.2 0.479099094867706
		};
		\addplot [draw=color2, fill=color2, mark=square*, only marks]
		table{%
			2 0.548558592796326
		};
		\addlegendentry{\shortstack{FJ-DGD-2 \\ ($\lambda$ = 0.5)}}
		\addplot [line width=1.08pt, color2, forget plot]
		table {%
			2 0.491846852004528
			2 0.602567560970783
		};
		\addplot [line width=1.08pt, color2, forget plot]
		table {%
			1.8 0.491846852004528
			2.2 0.491846852004528
		};
		\addplot [line width=1.08pt, color2, forget plot]
		table {%
			1.8 0.602567560970783
			2.2 0.602567560970783
		};
	\end{axis}
	
\end{tikzpicture}}}
	\caption{Accuracy on MNIST dataset in \cref{ex:task-mnist} without malicious agents.
		Marks show the mean and bars the $75\%$ percentile interval across agents.}
	\label{fig:mnist-no-mal}
\end{figure}

Motivated by the higher performance of~\eqref{eq:fj-dgd-2} in the previous section,
we focus on it for the next experiments.
First,
we set the baseline evaluating classification accuracy in the ideal case with no malicious agents,
running $1000$ learning iterations.
We use $\lambda=0.5$ for this test.
As expected,
the algorithm~\eqref{eq:dgd} performs best,
followed by~\eqref{eq:fj-dgd-2} and lastly by local training,
and the accuracy decreases as more classes are removed from each local dataset.
Note that the steady-state accuracy on the dataset ``Hom'' is expected to be almost equal with the three algorithms because all agents qualitatively have the same information in this case.
Nonetheless,
owing to different speeds of convergence and augmented training data compared to local datasets,
both~\eqref{eq:dgd} and ~\eqref{eq:fj-dgd-2} achieve higher accuracy than ``Local GD'' even after $1000$ learning iterations.
Such a faster convergence can be a further argument in favor of collaborative training.

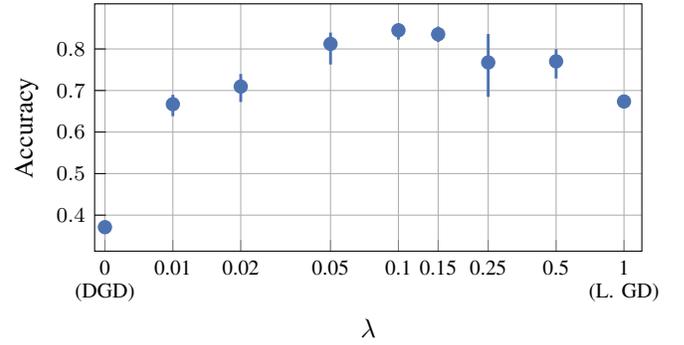
\begin{figure}
	\centering
	\begin{tikzpicture}
	\definecolor{steelblue76114176}{RGB}{76,114,176}
	
	\definecolor{darkgray176}{RGB}{176,176,176}
	\definecolor{lightgray204}{RGB}{204,204,204}
	
	\begin{axis}[
		width=\linewidth,
		height=.55\linewidth,
		tick align=inside,
		tick pos=left,
		x grid style={darkgray176},
		xmajorgrids,
		xtick={0.005, 0.01, 0.02, 0.05, 0.1, 0.15, 0.25, 0.5, 1},
		xticklabels={\shortstack{0\\(DGD)}, 0.01, 0.02, 0.05, 0.1, 0.15, 0.25, 0.5, \shortstack{1\\(L.~GD)}},
		xticklabel style={font=\footnotesize},
		xmode=log,
		xmin=0.0045,
		xmax=1.2,
		log ticks with fixed point,
		xlabel={$\lambda$},
		y grid style={darkgray176},
		ymajorgrids,
		ytick={0.4,0.5,0.6,0.7,0.8},
		yticklabel style={font=\footnotesize},
		ylabel={Accuracy},
		ytick style={color=black}
		]
		\addplot [draw=steelblue76114176, fill=steelblue76114176, forget plot, mark=*, only marks, mark size=2.5]
		table{%
			0.005 0.37104806303978
			0.01 0.667085111141205
			0.02 0.709553599357605
			0.05 0.812166273593903
			0.1 0.84497606754303
			0.15 0.835470497608185
			0.25 0.767553627490997
			0.5 0.770182132720947
			1 0.673615574836731
		};
		\addplot [line width=1.08pt, steelblue76114176, forget plot]
		table {%
			0.005 0.362927928566933
			0.005 0.378704946488142
		};
		\addplot [line width=1.08pt, steelblue76114176, forget plot]
		table {%
			0.01 0.638108126819134
			0.01 0.689797289669514
		};
		\addplot [line width=1.08pt, steelblue76114176, forget plot]
		table {%
			0.02 0.672297284007072
			0.02 0.739831060171127
		};
		\addplot [line width=1.08pt, steelblue76114176, forget plot]
		table {%
			0.05 0.762657657265663
			0.05 0.839448213577271
		};
		\addplot [line width=1.08pt, steelblue76114176, forget plot]
		table {%
			0.1 0.822590090334415
			0.1 0.858907647430897
		};
		\addplot [line width=1.08pt, steelblue76114176, forget plot]
		table {%
			0.15 0.816722996532917
			0.15 0.853727482259274
		};
		\addplot [line width=1.08pt, steelblue76114176, forget plot]
		table {%
			0.25 0.684707187116146
			0.25 0.835867129266262
		};
		\addplot [line width=1.08pt, steelblue76114176, forget plot]
		table {%
			0.5 0.728975214064121
			0.5 0.799448199570179
		};
		\addplot [line width=1.08pt, steelblue76114176, forget plot]
		table {%
			1 0.660743251442909
			1 0.688198186457157
		};
	\end{axis}
	
\end{tikzpicture}
	\caption{Accuracy on MNIST ``Hom'' in \cref{ex:task-mnist} with malicious agents after $300$ learning iterations.
	}
	\label{fig:mnist-mal-hom}
\end{figure}

We then randomly select $10$ malicious agents across the network.
This yields a ratio between malicious and honest agents of over $11\%$.
We set stealthy attacks whereby each malicious agent $m$ trains its local parameter $x_m$ with~\eqref{eq:dgd} but,
at each iteration $k$,
communicates to its neighbors purposely corrupted parameters $\tilde{x}_m$ computed as follows
\begin{equation}\label{eq:attack}
	\tilde{x}_{m,k} = x_{m,k} + v_k, \quad
	v_k \sim \calN(0, \operatorname{diag}\lr\min\{\eta|x_{m,k}|, \kappa\}\rr),
\end{equation}
where the minimization is element-wise.

\subsubsection{Dataset ``Hom''}
We first run a shorter training on dataset ``Homogeneous'' with $\eta=\kappa=5$.
The accuracy is reported in \autoref{fig:mnist-mal-hom} where we make explicit that $\lambda=0$ and $\lambda=1$ are respectively~\eqref{eq:dgd} and ``Local GD'' (''L. GD'').
Our approach tailored to personalization enhances resilience as well,
outperforming the benchmarks.
Algorithm~\eqref{eq:dgd} treats all agents,
malicious ones included,
equally.
``Local GD'' is insensitive to attacks but does not leverage collaboration with honest agents.
Our algorithm~\eqref{eq:fj-dgd-2} with $\lambda \in (0,1)$ stands in the middle;
agents do not fully rely on the others but retain benefits of mutual collaboration.
The U-shaped curve confirms the intuition that partially reducing collaboration is beneficial as it mitigates the influence of malicious agents,
but excessively doing so degrades performance because updates are too conservative.
The same behavior was analytically characterized in~\cite{Ballotta24tac-competitionCollaboration} for resilient consensus.
The rapid rise of the accuracy suggests that~\eqref{eq:fj-dgd-2} is sensitive to small values of $\lambda$,
which already improve resilience significantly.

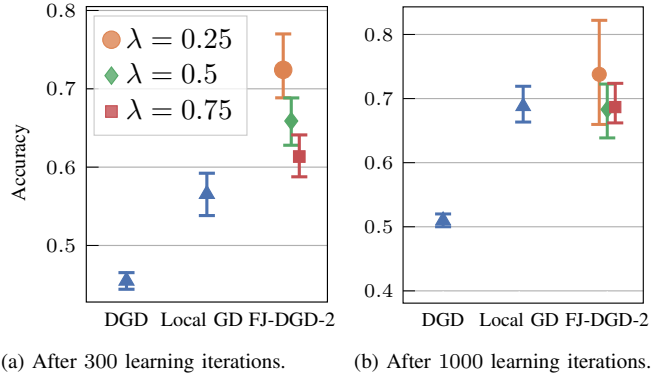
\begin{figure}
	\centering
	\subfloat[After $300$ learning iterations.
	\label{fig:mnist-mal-het2-300}]{\resizebox{.53\linewidth}{!}{

\begin{tikzpicture}
	
	\definecolor{darkgray176}{RGB}{176,176,176}
	\definecolor{indianred1967882}{RGB}{196,78,82}
	\definecolor{lightgray204}{RGB}{204,204,204}
	\definecolor{mediumseagreen85168104}{RGB}{85,168,104}
	\definecolor{peru22113282}{RGB}{221,132,82}
	\definecolor{steelblue76114176}{RGB}{76,114,176}

	\node[rotate=90,font=\scriptsize] at (-.8,1.7) {Accuracy};
	
	\begin{axis}[
		height=.585\linewidth,
		width=.51\linewidth,
		legend cell align={left},
		legend style={
			fill opacity=0.8,
			draw opacity=1,
			text opacity=1,
			draw=lightgray204
		},
		legend pos=north west,
		tick align=inside,
		tick pos=left,
		unbounded coords=jump,
		x grid style={darkgray176},
		xmin=-0.5, xmax=2.5,
		xtick style={color=black},
		xtick={0,.94,2.06},
		xticklabels={DGD,Local GD,FJ-DGD-2},
		xtick style={color=white},
		xticklabel style={font=\scriptsize},
		yticklabel style={font=\scriptsize},
		y grid style={darkgray176},
		ymajorgrids,
		ymin=0.427786007151008, ymax=0.805,
		ytick style={color=black}
		]
		\addplot [line width=1.08pt, steelblue76114176, forget plot]
		table {%
			0 0.444084465503693
			0 0.465311139822006
		};
		\addplot [line width=1.08pt, steelblue76114176, forget plot]
		table {%
			0.1 0.444084465503693
			-0.1 0.444084465503693
		};
		\addplot [line width=1.08pt, steelblue76114176, forget plot]
		table {%
			0.1 0.465311139822006
			-0.1 0.465311139822006
		};
		\addplot [line width=1.08pt, steelblue76114176, forget plot]
		table {%
			1 0.538166128098965
			1 0.592224359512329
		};
		\addplot [line width=1.08pt, steelblue76114176, forget plot]
		table {%
			1.1 0.538166128098965
			0.9 0.538166128098965
		};
		\addplot [line width=1.08pt, steelblue76114176, forget plot]
		table {%
			1.1 0.592224359512329
			0.9 0.592224359512329
		};
		\addplot [draw=steelblue76114176, fill=steelblue76114176, forget plot, mark=triangle*, only marks, mark size=3]
		table{%
			0 0.45489764213562
			1 0.565600335597992
		};
		\addplot [line width=1.08pt, peru22113282, forget plot]
		table {%
			1.95 0.724182307720184
		};
		\addplot [line width=1.08pt, peru22113282, forget plot]
		table {%
			1.95 0.688484534621239
			1.95 0.770053632557392
		};
		\addplot [line width=1.08pt, peru22113282, forget plot]
		table {%
			1.85 0.688484534621239
			2.05 0.688484534621239
		};
		\addplot [line width=1.08pt, peru22113282, forget plot]
		table {%
			1.85 0.770053632557392
			2.05 0.770053632557392
		};
		\addplot [draw=peru22113282, fill=peru22113282, mark=*, only marks, mark size=3]
		table{%
			1.95 0.724182307720184
		};
		\addlegendentry{$\lambda = 0.25$}
		\addplot [line width=1.08pt, mediumseagreen85168104, forget plot]
		table {%
			2.05 0.658888518810272
		};
		\addplot [line width=1.08pt, mediumseagreen85168104, forget plot]
		table {%
			2.05 0.627946600317955
			2.05 0.688358835875988
		};
		\addplot [line width=1.08pt, mediumseagreen85168104, forget plot]
		table {%
			1.95 0.627946600317955
			2.15 0.627946600317955
		};
		\addplot [line width=1.08pt, mediumseagreen85168104, forget plot]
		table {%
			1.95 0.688358835875988
			2.15 0.688358835875988
		};
		\addplot [draw=mediumseagreen85168104, fill=mediumseagreen85168104, mark=diamond*, only marks, mark size=3]
		table{%
			2.05 0.658888518810272
		};
		\addlegendentry{$\lambda = 0.5$}
		\addplot [line width=1.08pt, indianred1967882, forget plot]
		table {%
			2.15 0.613519310951233
		};
		\addplot [line width=1.08pt, indianred1967882, forget plot]
		table {%
			2.15 0.587657824158669
			2.15 0.641129463911057
		};
		\addplot [line width=1.08pt, indianred1967882, forget plot]
		table {%
			2.05 0.587657824158669
			2.25 0.587657824158669
		};
		\addplot [line width=1.08pt, indianred1967882, forget plot]
		table {%
			2.05 0.641129463911057
			2.25 0.641129463911057
		};
		\addplot [draw=indianred1967882, fill=indianred1967882, mark=square*, only marks, mark size=2]
		table{%
			2.15 0.613519310951233
		};
		\addlegendentry{$\lambda = 0.75$}
	\end{axis}
	
\end{tikzpicture}}}%
	\subfloat[After $1000$ learning iterations.
	\label{fig:mnist-mal-het2-1000}]{\resizebox{.47\linewidth}{!}{

\begin{tikzpicture}
	
	\definecolor{darkgray176}{RGB}{176,176,176}
	\definecolor{indianred1967882}{RGB}{196,78,82}
	\definecolor{lightgray204}{RGB}{204,204,204}
	\definecolor{mediumseagreen85168104}{RGB}{85,168,104}
	\definecolor{peru22113282}{RGB}{221,132,82}
	\definecolor{steelblue76114176}{RGB}{76,114,176}
	
	\begin{axis}[
		height=.6\linewidth,
		width=.52\linewidth,
		legend cell align={left},
		legend style={
			fill opacity=0.8,
			draw opacity=1,
			text opacity=1,
			draw=lightgray204
		},
		legend pos=north west,
		tick align=inside,
		tick pos=left,
		unbounded coords=jump,
		x grid style={darkgray176},
		xmin=-0.5, xmax=2.5,
		xtick style={color=black},
		xtick={0,.95,2.05},
		xticklabels={DGD,Local GD,FJ-DGD-2},
		xtick style={color=white},
		xticklabel style={font=\scriptsize},
		xlabel style={font=\scriptsize},
		y grid style={darkgray176},
		ytick={.4,.5,.6,.7,.8},
		yticklabel style={font=\scriptsize},
		ymajorgrids,
		ymin=0.3808598106727, ymax=0.843070395477116,
		ytick style={color=black}
		]
		\addplot [line width=1.08pt, steelblue76114176, forget plot]
		table {%
			0 0.5
			0 0.52
		};
		\addplot [line width=1.08pt, steelblue76114176, forget plot]
		table {%
			0.1 0.5
			-0.1 0.5
		};
		\addplot [line width=1.08pt, steelblue76114176, forget plot]
		table {%
			0.1 0.52
			-0.1 0.52
		};
		\addplot [line width=1.08pt, steelblue76114176, forget plot]
		table {%
			1 0.663367137312889
			1 0.719189181923866
		};
		\addplot [line width=1.08pt, steelblue76114176, forget plot]
		table {%
			1.1 0.663367137312889
			0.9 0.663367137312889
		};
		\addplot [line width=1.08pt, steelblue76114176, forget plot]
		table {%
			1.1 0.719189181923866
			0.9 0.719189181923866
		};
		\addplot [draw=steelblue76114176, fill=steelblue76114176, forget plot, mark=triangle*, only marks, mark size=3]
		table{%
			0 0.51
			1 0.687843799591064
		};
		\addplot [line width=1.08pt, peru22113282, forget plot]
		table {%
			1.95 0.737849771976471
		};
		\addplot [line width=1.08pt, peru22113282, forget plot]
		table {%
			1.95 0.659639626741409
			1.95 0.822060823440552
		};
		\addplot [line width=1.08pt, peru22113282, forget plot]
		table {%
			1.85 0.659639626741409
			2.05 0.659639626741409
		};
		\addplot [line width=1.08pt, peru22113282, forget plot]
		table {%
			1.85 0.822060823440552
			2.05 0.822060823440552
		};
		\addplot [draw=peru22113282, fill=peru22113282, mark=*, only marks, mark size=2.5]
		table{%
			1.95 0.737849771976471
		};
		\addplot [line width=1.08pt, mediumseagreen85168104, forget plot]
		table {%
			2.05 0.683022081851959
		};
		\addplot [line width=1.08pt, mediumseagreen85168104, forget plot]
		table {%
			2.05 0.638626106083393
			2.05 0.722623862326145
		};
		\addplot [line width=1.08pt, mediumseagreen85168104, forget plot]
		table {%
			1.95 0.638626106083393
			2.15 0.638626106083393
		};
		\addplot [line width=1.08pt, mediumseagreen85168104, forget plot]
		table {%
			1.95 0.722623862326145
			2.15 0.722623862326145
		};
		\addplot [draw=mediumseagreen85168104, fill=mediumseagreen85168104, mark=diamond*, only marks, mark size=3]
		table{%
			2.05 0.683022081851959
		};
		\addplot [line width=1.08pt, indianred1967882, forget plot]
		table {%
			2.15 0.686741828918457
		};
		\addplot [line width=1.08pt, indianred1967882, forget plot]
		table {%
			2.15 0.662038303911686
			2.15 0.723761267960072
		};
		\addplot [line width=1.08pt, indianred1967882, forget plot]
		table {%
			2.05 0.662038303911686
			2.25 0.662038303911686
		};
		\addplot [line width=1.08pt, indianred1967882, forget plot]
		table {%
			2.05 0.723761267960072
			2.25 0.723761267960072
		};
		\addplot [draw=indianred1967882, fill=indianred1967882, mark=square*, only marks, mark size=2]
		table{%
			2.15 0.686741828918457
		};
	\end{axis}
	
\end{tikzpicture}}}
	\caption{Accuracy on MNIST ``Het-2'' in \cref{ex:task-mnist} with malicious agents.}
	\label{fig:mnist-mal-het2}
\end{figure}

\begin{figure*}
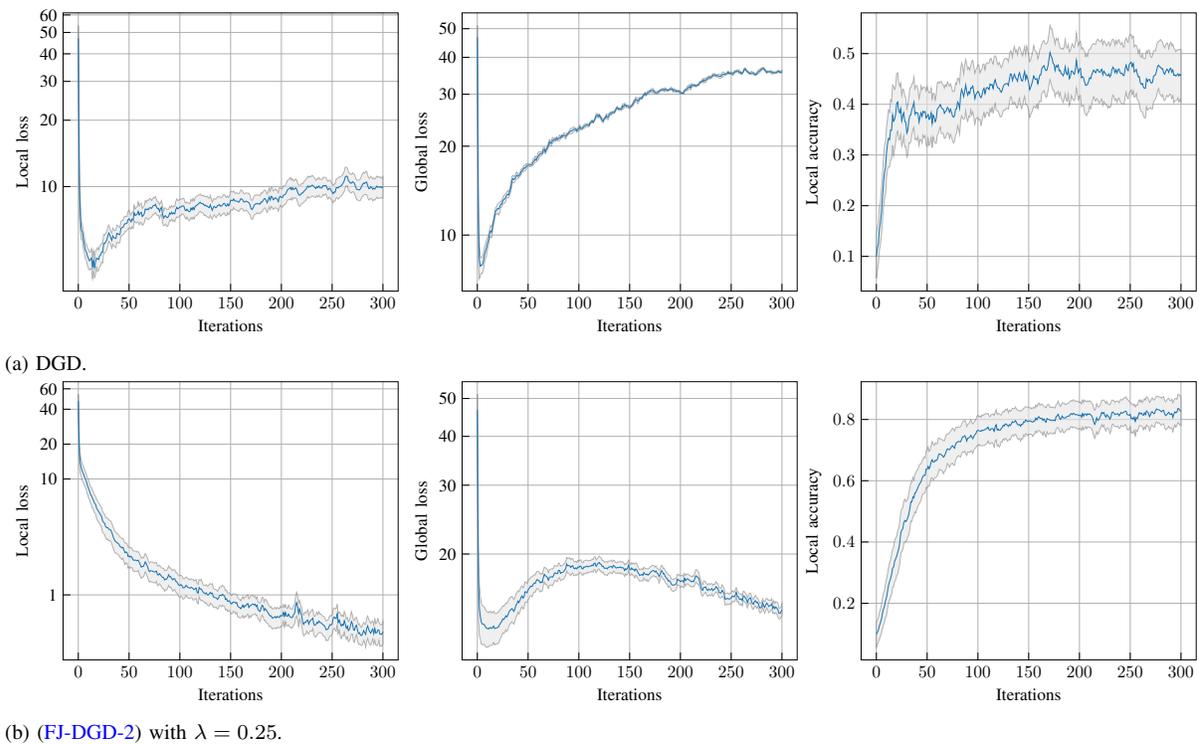

	\centering
	\subfloat[DGD.
	\label{fig:mnist-mal-het2-train-dgd}]{\input{figures/mnist/random_graph/mal_tx-noise/het_2/plot_training_N100_mal10_it300_bzNone_noise3_dgd}}\\
	\subfloat[\eqref{eq:fj-dgd-2} with $\lambda = 0.25$.
	\label{fig:mnist-mal-het2-train-fjdgd}]{\input{figures/mnist/random_graph/mal_tx-noise/het_2/plot_training_N100_mal10_it300_bzNone_noise3_fjdgd0.25}}
	\caption{Local losses, global losses, and local test accuracy for all agents on MNIST ``Het-2'' in \cref{ex:task-mnist} with malicious agents.}
	\label{fig:mnist-mal-het2-train}
\end{figure*}

\subsubsection{Dataset ``Het-2''}
We then train the agents on ``Heterogeneous-2'' with the choice $\eta=10$ and $\kappa=3$ in~\eqref{eq:attack}
The achieved accuracy is shown in \autoref{fig:mnist-mal-het2}.
Additionally,
\autoref{fig:mnist-mal-het2-train} illustrates the behavior of two different algorithms during training,
where the solid lines show the average values and the grey filled area delimits one standard deviation interval across all agents at each iteration.
For the sake of computational time,
the global loss is computed at each iteration on a random subset (chosen beforehand and fixed through the iterations) of the global training set.
The influence of malicious agents causes the loss produced by~\eqref{eq:dgd} to steadily increase in the leftmost and center panels of \autoref{fig:mnist-mal-het2-train-dgd},
which quickly settles the accuracy to a suboptimal value (rightmost panel).
On the other hand,
setting just $\lambda=0.25$ in~\eqref{eq:fj-dgd-2} provides a good level of resilience as particularly evident by the global loss in the center panel of~\autoref{fig:mnist-mal-het2-train-fjdgd} that initially increases but recovers a decreasing trend after about $100$ iterations.
In this case,
the gap between accuracy achieved with our approach and~\eqref{eq:dgd} local training after $300$ iterations is even sharper,
as~\autoref{fig:mnist-mal-het2-300} highlights.

\paragraph*{Early stopping}
Contrary to the experiment without malicious agents in \autoref{fig:mnist-no-mal},
running~\eqref{eq:fj-dgd-2} with $\lambda\in\{0.25,0.5\}$ for $1000$ iterations causes several agents to overfit the training data with malicious agents.
To overcome this issue,
we implement an early stopping policy during training.
At every iteration,
each honest agent $i$ computes a moving average (MA) of the accuracy on its local training set over a sliding window of length $W$,
storing the maximal (so smoothed) accuracy achieved so far along with the corresponding local parameter $x_i$.
If the maximal MA of the accuracy does not increase,
\ie no improvement is done,
for more than $W_\text{imp}$ consecutive iterations,
the agent resets its local parameter to the one corresponding to the maximal accuracy MA and stops local training,
but keeps transmitting its local parameter to neighbors.
We set $W = W_\text{imp} = 20$ after trial-and-error.

\autoref{fig:mnist-mal-het2-1000} illustrates the outcome with early stopping.
The latter makes~\eqref{eq:dgd} slightly degrade performance,
possibly due to more randomness in the updates,
thus in \autoref{fig:mnist-mal-het2-1000} we report the accuracy~\eqref{eq:dgd} without early stopping for fairness.
Nonetheless,
our approach still outperforms both~\eqref{eq:dgd} and local training with $\lambda=0.25$,
some agents performing particularly well (global accuracy over $80\%$),
and achieves accuracy comparable to local training with $\lambda\in\{0.5,0.75\}$.

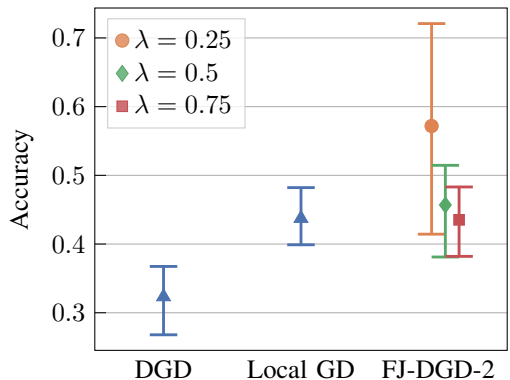
\begin{figure}
	\centering

\begin{tikzpicture}
	
	\definecolor{darkgray176}{RGB}{176,176,176}
	\definecolor{indianred1967882}{RGB}{196,78,82}
	\definecolor{lightgray204}{RGB}{204,204,204}
	\definecolor{mediumseagreen85168104}{RGB}{85,168,104}
	\definecolor{peru22113282}{RGB}{221,132,82}
	\definecolor{steelblue76114176}{RGB}{76,114,176}
	
	\begin{axis}[
		scale=.8,
		legend cell align={left},
		legend style={fill opacity=0.8, 
			draw opacity=1, 
			text opacity=1, 
			draw=lightgray204},
		legend pos=north west,
		tick align=inside,
		tick pos=left,
		unbounded coords=jump,
		x grid style={darkgray176},
		xmin=-0.5, xmax=2.5,
		xtick style={color=black},
		xtick={0,1,2},
		xticklabels={DGD,Local GD,FJ-DGD-2},
		xtick style={color=white},
		y grid style={darkgray176},
		ytick={.3,.4,.5,.6,.7},
		ylabel={Accuracy},
		ymajorgrids,
		ymin=0.245267468690872, ymax=0.743550103902817,
		ytick style={color=black}
		]
		\addplot [line width=1.08pt, steelblue76114176, forget plot]
		table {%
			0 0.267916679382324
			0 0.367409914731979
		};
		\addplot [line width=1.08pt, steelblue76114176, forget plot]
		table {%
			0.1 0.267916679382324
			-0.1 0.267916679382324
		};
		\addplot [line width=1.08pt, steelblue76114176, forget plot]
		table {%
			0.1 0.367409914731979
			-0.1 0.367409914731979
		};
		\addplot [line width=1.08pt, steelblue76114176, forget plot]
		table {%
			1 0.398930184543133
			1 0.482128366827965
		};
		\addplot [line width=1.08pt, steelblue76114176, forget plot]
		table {%
			1.1 0.398930184543133
			0.9 0.398930184543133
		};
		\addplot [line width=1.08pt, steelblue76114176, forget plot]
		table {%
			1.1 0.482128366827965
			0.9 0.482128366827965
		};
		\addplot [draw=steelblue76114176, fill=steelblue76114176, forget plot, mark=triangle*, only marks, mark size=3]
		table{%
			0 0.323246240615845
			1 0.437324345111847
		};
		\addplot [line width=1.08pt, peru22113282, forget plot]
		table {%
			1.95 0.571798801422119
		};
		\addplot [line width=1.08pt, peru22113282, forget plot]
		table {%
			1.95 0.414369378238916
			1.95 0.720900893211365
		};
		\addplot [line width=1.08pt, peru22113282, forget plot]
		table {%
			1.85 0.414369378238916
			2.05 0.414369378238916
		};
		\addplot [line width=1.08pt, peru22113282, forget plot]
		table {%
			1.85 0.720900893211365
			2.05 0.720900893211365
		};
		\addplot [draw=peru22113282, fill=peru22113282, mark=*, only marks, mark size=2.5]
		table{%
			1.95 0.571798801422119
		};
		\addlegendentry{$\lambda=0.25$}
		\addplot [line width=1.08pt, mediumseagreen85168104, forget plot]
		table {%
			2.05 0.456847846508026
		};
		\addplot [line width=1.08pt, mediumseagreen85168104, forget plot]
		table {%
			2.05 0.381159897893667
			2.05 0.51460587978363
		};
		\addplot [line width=1.08pt, mediumseagreen85168104, forget plot]
		table {%
			1.95 0.381159897893667
			2.15 0.381159897893667
		};
		\addplot [line width=1.08pt, mediumseagreen85168104, forget plot]
		table {%
			1.95 0.51460587978363
			2.15 0.51460587978363
		};
		\addplot [draw=mediumseagreen85168104, fill=mediumseagreen85168104, mark=diamond*, only marks, mark size=3]
		table{%
			2.05 0.456847846508026
		};
		\addlegendentry{$\lambda=0.5$}
		\addplot [line width=1.08pt, indianred1967882, forget plot]
		table {%
			2.15 0.435244292020798
		};
		\addplot [line width=1.08pt, indianred1967882, forget plot]
		table {%
			2.15 0.382060810923576
			2.15 0.483096841722727
		};
		\addplot [line width=1.08pt, indianred1967882, forget plot]
		table {%
			2.05 0.382060810923576
			2.25 0.382060810923576
		};
		\addplot [line width=1.08pt, indianred1967882, forget plot]
		table {%
			2.05 0.483096841722727
			2.25 0.483096841722727
		};
		\addplot [draw=indianred1967882, fill=indianred1967882, mark=square*, only marks, mark size=2]
		table{%
			2.15 0.435244292020798
		};
		\addlegendentry{$\lambda=0.75$}
	\end{axis}
	
\end{tikzpicture}
	\caption{Accuracy on MNIST ``Het-5'' in \cref{ex:task-mnist} with malicious agents after $1000$ learning iterations.}
	\label{fig:mnist-mal-het5}
\end{figure}

\subsubsection{Dataset ``Het-5''}
Finally,
we train on the dataset ``Heterogeneous-$5$'' for $1000$ iterations using early stopping.
\autoref{fig:mnist-mal-het5} reports the achieved accuracy which shows the same qualitative behavior observed with ``Heterogeneous-$2$.''
In this case,
the gap between our approach and the baselines is ever wider.
Setting $\lambda=0.25$ sizably outperforms both~\eqref{eq:dgd} and ``Local GD'',
while $\lambda\in\{0.5,0.75\}$ yields slightly higher accuracy than ``Local GD''.

\section{Limitations and Future Research}\label{sec:discussion}

This work provides formal guarantees and study the performance achieved with a globally assigned parameter $\lambda$.
More realistic scenarios might require that each agent $i$ locally sets its own parameter $\lambda_i$.
This raises both technical and practical questions,
such as if and how convergence can be characterized and how should the agents meaningfully set local parameters $\lambda_i$'s.
It may be particularly challenging with malicious agents,
whereby a poor choice of $\lambda_i$ may degrade resilience and yield low performance.
Literature on resilient consensus offers interesting options to dynamically adjust weights,
such as heuristic metrics of dissimilarity with neighbors~\cite{Baras19med-trust,Bonagura23acc-resilientConsensusEvidence} or inspired by the Hegselmann-Krause model~\cite{Hegselmann02jasss-boundedConfidence}.
These simple mechanisms may be effective to locally tune competition parameters $\lambda_i$'s in a decentralized manner during training.

Related to the previous discussion,
it is interesting to pair robust training with detection of adversaries.
For example,
the algorithms in~\cite{Yemini25tac-resilientDistributedOptim,Ballotta24acc-trustConfidence} use trust observations obtained from the wireless channel to filter out suspicious messages at each iteration,
and identify all adversaries in finite time almost surely.
More broadly,
if active and/or time-consuming security mechanisms are available,
the approach proposed in the present work can make the early training rounds resilient and be replaced by a standard distributed learning algorithm after adversaries have been identified.

\section{Conclusion}\label{sec:conclusion}

We have proposed a lightweight distributed learning algorithm which accommodates for personalization and resilience by combining distributed gradient descent and Friedkin-Johnsen model.
We quantified both its geometric convergence rate and the worst-case distance from the nominal global optimum in the presence of attacks,
and experimentally showed its effectiveness by testing it on various classification tasks.
In particular, we showcased its ability to learn personalized models that retain good generalization and demonstrated how it enhances resilience against attacks aimed at disrupting the training.
	
	

	\if0\mode

\begin{IEEEbiography}[{\includegraphics[width=1in,height=1.25in,clip,keepaspectratio]{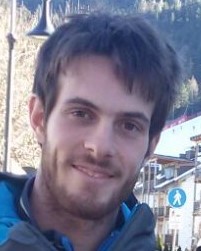}}]{Luca Ballotta}
	(Member, IEEE) is an Assistant Professor at the Department of Information Engineering, University of Padova, Padova, Italy.

He received the Ph.D. degree in information engineering from the University of Padova, Padova, Italy, in 2023.
He was a Postdoctoral Researcher at the Delft University of Technology, Delft, Netherlands in 2023-2025
and Visiting Student at the Massachusetts Institute of Technology in 2020 and 2022.
His research interests include resource allocation in multi-agent and network control systems,
resilient distributed control and learning,
and safe control.

Dr. Ballotta was the recipient of the Young Author Prize at the 2020 IFAC World Congress and was finalist at the 2024 EECI PhD Award.
\end{IEEEbiography}

\begin{IEEEbiography}[{\includegraphics[width=1in,height=1.25in,clip,keepaspectratio]{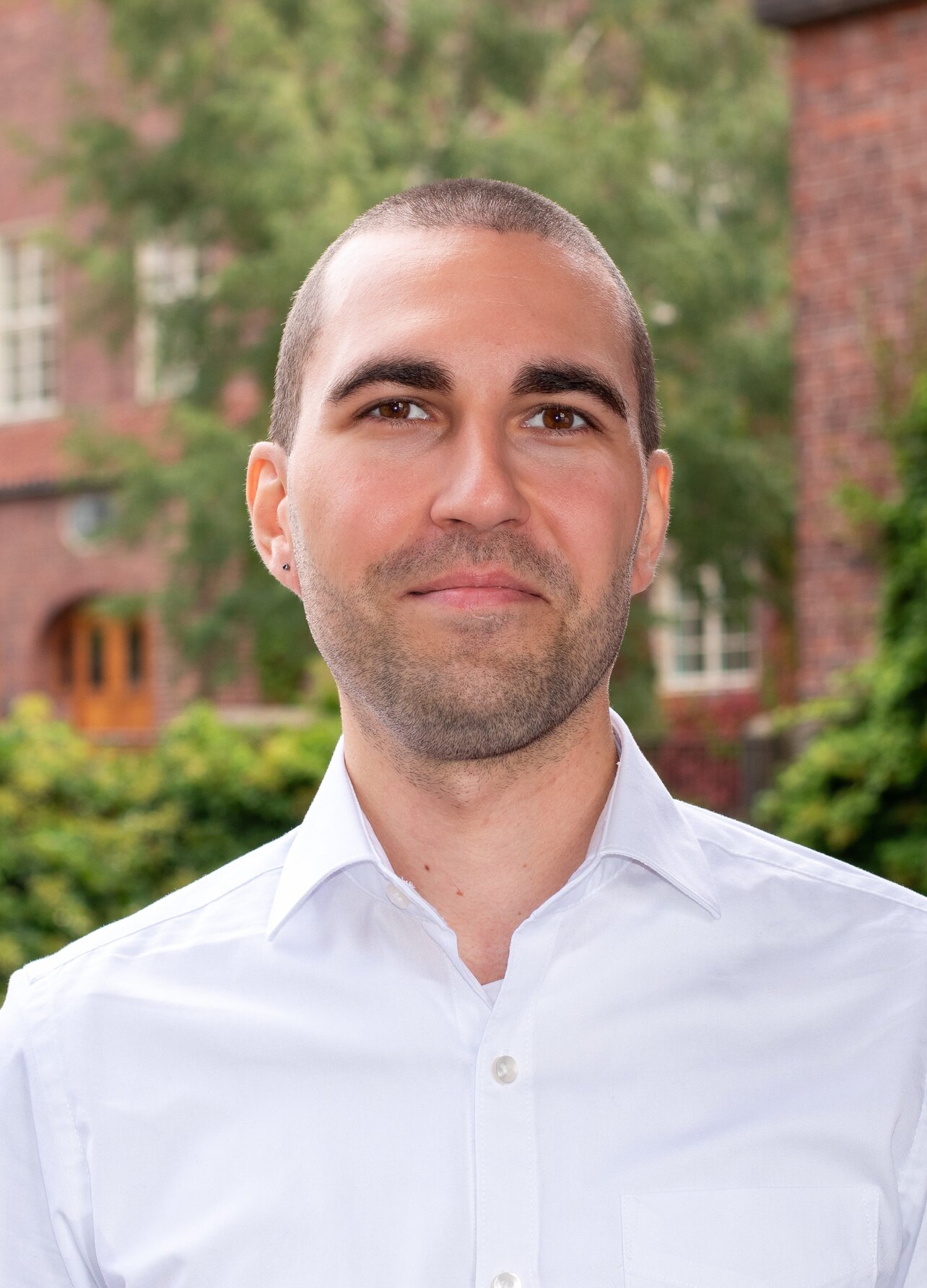}}]{Nicola Bastianello}
	(Member, IEEE) is a post-doc at the School of Electrical Engineering and Computer Science, KTH Royal Institute of Technology, Sweden. From 2021 to 2022 he was a post-doc at the Department of Information Engineering (DEI), University of Padova, Italy. He received the Ph.D. in Information Engineering at the University of Padova, Italy in 2021. During the Ph.D. he was a visiting student at the Department of Electrical, Computer, and Energy Engineering (ECEE), University of Colorado Boulder, Colorado, USA. He received the master degree in Automation Engineering (2018) and the bachelor degree in Information Engineering (2015) from the University of Padova, Italy. He currently serves in the IEEE CSS and EUCA Conference Editorial Boards. His research lies at the intersection of optimization and learning, with a focus on multi-agent systems.
\end{IEEEbiography}

\begin{IEEEbiography}[{\includegraphics[width=1in,height=1.25in,clip,keepaspectratio]{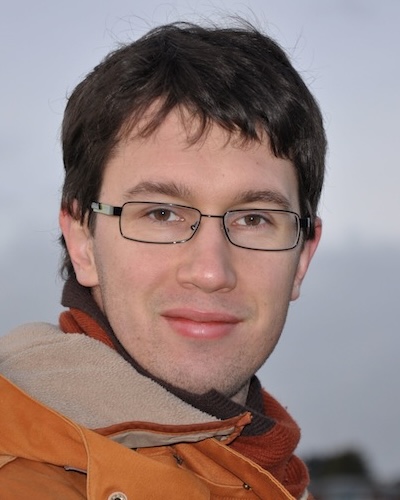}}]{Riccardo M. G. Ferrari }
	(Senior Member, IEEE) received the Laurea degree (cum laude and printing
honours) in electronic engineering and the Ph.D.
degree in information engineering from the University of Trieste, Italy, in 2004 and 2009, respectively.
He has held both academic and industrial research and
development positions, in particular as a Researcher
in the field of process instrumentation and control for
the steel-making sector. He is a Marie Curie alumnus
and currently an Associate Professor with Delft
Center for Systems and Control, Delft University
of Technology, The Netherlands. His research interests include wind power
fault tolerant control and fault diagnosis and attack detection in large-scale
cyber-physical systems, with applications to electric vehicles, cooperative
autonomous vehicles, and industrial control systems. He was a recipient of
the 2005 Giacomini Award of Italian Acoustic Society and he obtained the
2nd place in the Competition on Fault Detection and Fault Tolerant Control for
Wind Turbines during IFAC 2011. Furthermore, he was awarded an Honorable
Mention for the Pauk M. Frank Award at the IFAC SAFEPROCESS in
2018 and won an Airbus Award at IFAC 2020 for the best contribution to
the competition on Aerospace Industrial Fault Detection.
\end{IEEEbiography}

\begin{IEEEbiography}[{\includegraphics[width=1in,height=1.25in,clip,keepaspectratio]{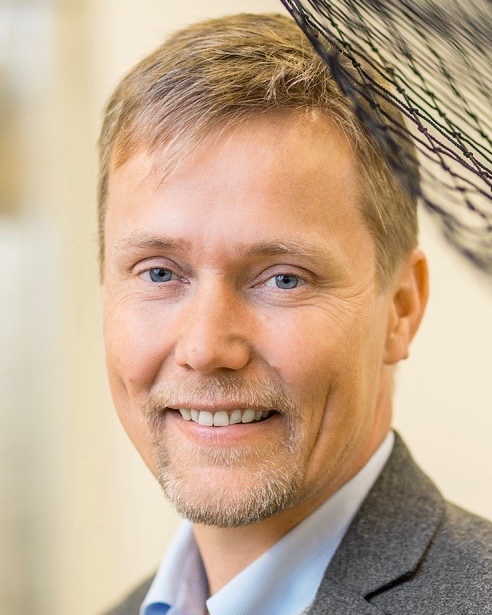}}]{Karl H. Johansson}
	(Fellow, IEEE) is Swedish Research Council Distinguished Professor in Electrical Engineering and Computer Science at KTH Royal Institute of Technology in Sweden and Founding Director of Digital Futures. He earned his MSc degree in Electrical Engineering and PhD in Automatic Control from Lund University. He has held visiting positions at UC Berkeley, Caltech, NTU and other prestigious institutions. His research interests focus on networked control systems and cyber-physical systems with applications in transportation, energy, and automation networks. For his scientific contributions, he has received numerous best paper awards and various distinctions from IEEE, IFAC, and other organizations. He has been awarded Distinguished Professor by the Swedish Research Council, Wallenberg Scholar by the Knut and Alice Wallenberg Foundation, Future Research Leader by the Swedish Foundation for Strategic Research. He has also received the triennial IFAC Young Author Prize and IEEE CSS Distinguished Lecturer. He is the recipient of the 2024 IEEE CSS Hendrik W. Bode Lecture Prize. His extensive service to the academic community includes being President of the European Control Association, IEEE CSS Vice President Diversity, Outreach \& Development, and Member of IEEE CSS Board of Governors and IFAC Council. He has served on the editorial boards of Automatica, IEEE TAC, IEEE TCNS and many other journals. He has also been a member of the Swedish Scientific Council for Natural Sciences and Engineering Sciences. He is Fellow of both the IEEE and the Royal Swedish Academy of Engineering Sciences.
\end{IEEEbiography}	
    \fi
	
\end{document}